\newtheorem{theorem}{Theorem}
\newtheorem{lemma}{Lemma}
\theoremstyle{definition}
\newtheorem{definition}{Definition}
\def \beq{ \begin{equation} }
\def \eeq{\end{equation}}
\definecolor{dark-blue}{rgb}{0.15,0.15,0.4}
\definecolor{dark-red}{rgb}{0.4,0.15,0.15}
\definecolor{medium-red}{rgb}{0.6,0,0}
\definecolor{medium-blue}{rgb}{0,0,0.6}
\newcommand{\massConfig}[9]{
    \def\colora{#1}
    \def\colorb{#2}
    \def\colorc{#3}
    \def\colord{#4}
    \def\labela{#5}
    \def\labelb{#6}
    \def\labelc{#7}
    \def\labeld{#8}
    \def\xa{#9}
    \massConfigContinued
}
\newcommand{\massConfigContinued}[7]{
    \begin{tikzpicture}[every node/.style={draw, inner, thin, circle, inner sep=0pt, minimum size=6pt}]
        \coordinate (M1) at (\xa, #1);
        \coordinate (M2) at (#2, #3);
        \coordinate (M3) at (#4, #5);
        \coordinate (M4) at (#6, #7);
        \draw (M1)--(M2);
        \draw (M1)--(M3);
        \draw (M1)--(M4);
        \draw (M2)--(M3);
        \draw (M2)--(M4);
        \draw (M3)--(M4);

        \draw (M1) node[fill=\colora, label=\labela:\contour{white}{\tiny{$m_1$}}] {};
        \draw (M2) node[fill=\colorb, label=\labelb:\contour{white}{\tiny{$m_2$}}] {};
        \draw (M3) node[fill=\colorc, label=\labelc:\contour{white}{\tiny{$m_3$}}] {};
        \draw (M4) node[fill=\colord, label=\labeld:\contour{white}{\tiny{$m_4$}}] {};
    \end{tikzpicture}
}
\DeclareSymbolFont{AMSb}{U}{msb}{m}{n}
\DeclareMathSymbol{\bdC}{\mathbin}{AMSb}{'103}
\DeclareMathSymbol{\bdR}{\mathbin}{AMSb}{'122}
\DeclareMathSymbol{\bdN}{\mathbin}{AMSb}{'116}
\begin{document}

\title{\textbf{Bifurcations of  Central Configurations in the Four-Body Problem with some equal masses }}
\author{David Rusu \thanks{ Department of Mathematics, Wilfrid Laurier University}, 
 Manuele Santoprete\thanks{ Department of Mathematics, Wilfrid Laurier
University E-mail: msantopr@wlu.ca}}  \maketitle

\begin{abstract}
    We study the bifurcations of central configurations of the Newtonian four-body problem when some of the masses are equal. First, we continue numerically the solutions for the equal mass case, and we find values of the mass parameter at which the number of solutions changes. Then, using the Krawczyk method and some result of equivariant bifurcation theory, we rigorously prove  the existence of such bifurcations and classify them.    
\end{abstract}

\renewcommand{\thefootnote}{\alph{footnote})}


\noindent \textbf{Keywords:} celestial mechanics, n-body problem,
central configurations, bifurcations.  
\tableofcontents
\section{Introduction }
The Newtonian $n$-body problem is the study of the dynamics of $n$
point particles with masses $m_i\in{\mathbb R}^+$ and positions
$q_i\in{\mathbb R}^d$ ($i=1,\ldots,n$), moving according to
Newton's laws of motion:
\begin{equation}
m_j\ddot q_j=F _i = \sum_{i\neq j}\frac{m_im_j(q_i-q_j)}{r_{ij}^3}\quad 1\leq j\leq
n
\end{equation}
where $r_{ij}=\|q_i-q_j\|$ is the distance between $q_i$ and $q_j$. The force vector $ F _i \in \mathbb{R}  ^d $
can also be written as a partial gradient $ F _i = \nabla _i U $ where 
\[
U = \sum _{ i<j } \frac{ m _i m _j } { r _{ ij } }
\]
is the {\it Newtonian potential } function and $ \nabla _i $ denotes the vector of partial derivatives with respect to the $d$ components of $ q _i $. 

In the Newtonian $n$-body problem, the simplest possible motions are
such that the configuration is constant up to rotations and scaling,
and each body describes a Keplerian orbit. Only some special
configurations of particles are allowed in such motions. Wintner
called them {\itshape central configurations} (or c.c's, for short). A configuration
$(q_1,\ldots, q_n)$ is called a central configuration if and only if
there exists a $\lambda\in{\mathbb R}$ such that

\begin{equation}
\lambda(q_j-q_G)=\frac{ 1 } { m _j } \nabla _j U = \sum_{i\neq j} \frac{m_i(q_i-q_j)}{r_{ij}^3} \quad 1\leq
j\leq n  \label{cc}
\end{equation}
where $q_G=\sum_i m_i q_i / \sum_i m_i$ is the center of mass.
It turns out that the values of $\lambda$ are uniquely determined by the equation above, in fact 
\[\lambda = - \frac{ U } { I } \]
where 
\[
     I = \sum_i  m _i \| q _i - q _G \|^2 = \frac 1 M \sum _{i<j}  m _i m _j r _{ ij } ^2 
\]
is the moment of inertia with respect to $ q _G $, and $  M = \sum_i  m _i $.  
Equations (\ref{cc}) are invariant under rotations, dilatations and
 translations on the plane. Two central
configurations are considered equivalent if they are related by
these symmetry operations, and thus lie in the same equivalence class. 

The question of the existence and classification of central configurations is a difficult and fascinating problems that dates back to the work of 18th-century mathematicians Euler and Lagrange, and has been revived by contemporary mathematician Steven Smale \cite{smale_mathematical_1998}  with the conjecture (due to  Chazy \cite{chazy_sur_1918} and  Wintner \cite{wintner_analytical_1964}) that the number of central configurations is finite. 

An exact count of the central configurations of $n$-bodies was found by Moulton \cite{moulton_straight_1910} for the collinear $n$-body problem. Moulton showed that there are $ n!/2 $ collinear equivalence classes, that is there is one collinear relative equilibrium for each ordering of the masses.

 The number of planar central configurations  of $n$-bodies (for arbitrary $n$) is know when some of the masses are assumed sufficiently small \cite{xia_central_1991}, however, an exact count 
for an arbitrary set of positive masses is  known only  when $ n =2, 3 $. 

In the four-body problem the number of central configurations has been shown to be finite \cite{hampton_finiteness_2006}, but a complete characterization is known only for the equal masses case \cite{albouy_symetrie_1995,albouy_symmetric_1996}, when one of the masses is sufficiently small \cite{barros_set_2011,barros_bifurcations_2014}, and when there are two pairs of equal masses, with one pair sufficiently small \cite{corbera_central_2014}. There are also some partial results  if some of the masses are equal \cite{long_four-body_2002,perez-chavela_convex_2007,albouy_symmetry_2008}.

There are a number of papers investigating  the bifurcations of central configurations in the four-body problem. In \cite{simo_relative_1978} Sim\'o presented a numerical study of the bifurcations of the central configurations with arbitrary masses, and gave exact numbers of central configurations inferred by these numerical computations. In \cite{meyer_bifurcations_1988} Meyer and Schmidt studied the equilateral triangle family of central configurations and showed that  families of isosceles triangle bifurcate from the equilateral triangle family. In \cite{bernat_planar_2009} Bernat, Llibre and Perez-Chavela, studied the kite configurations of the four-body problem with three equal masses and found two bifurcation in the number of c.c.'s, one of which  is Meyer and Schmidt's bifurcation. This allowed them to obtain an exact count of the number of kite shaped c.c's.

In this paper we study the four-body problem in two special cases: the case where three of the masses are equal and the case where there are two pairs of equal masses. In both cases we first do a numerical study by varying one of the masses from the equal masses case. This allows us to determine, numerically, the values of the mass parameter for which there are bifurcations. Then we use interval arithmetic to implement the  Krawczyk method \cite{neumaier_interval_2008} and prove rigorously  the existence of the bifurcations we located numerically. In the three equal masses case we recover the bifurcations obtained in \cite{meyer_bifurcations_1988} and \cite{bernat_planar_2009} but we also find three supercritical pitchfork bifurcations for $ m = m _{ \ast \ast } \approx 0.99184227$. These are symmetry breaking bifurcations where one $\mathbb{Z}_2 $-symmetric configurations splits into three, two of which have no symmetry. In the case of two pairs of equal masses ( $ m _1 = m _2 = 1 $ and $ m _3 = m _4 = m $  with $ m \leq 1 $) we find two bifurcations: a fold and a supercritical pitchfork bifurcation.
A consequence of our analysis is that, based on our numerical results,  we are able to give an exact count of the number of c.c's in the four body problem with some equal masses. The numbers we obtain seem to be compatible with the numerical results of Sim\'o \cite{simo_relative_1978}. Unfortunately, our counts are also  based on certain numerical computations and therefore we are unable to prove the well known conjecture that states that, given four masses, there is a unique convex c.c. for each cyclic order of the masses (see Problem 10 in \cite{albouy_problems_2012}, and references therein).

Interestingly, in the four-vortex problem, a companion problem of the four-body problem, it is possible to give an exact count of the number of central configuration  if some of the vorticities are equal. In fact, in  \cite{hampton_relative_2014} we gave a complete description of the central configurations for the four-vortex problem with two pairs of equal vortices.  Unfortunately, the approach taken in \cite{hampton_relative_2014}  does not work in the Newtonian four-body problem, because the degree of the polynomial equations studied is greater and thus it is not possible to perform the same type of  Gr\"obner basis computations.


The paper is organized as follows. In Section \ref{sn:equations} we write the Dziobeck and the Albouy-Chenciner equations for central configurations. In Section \ref{sn:background} we briefly recall some important tools, namely the Krawczyk method, some bifurcation theory and some facts related to equivariant bifurcation theory.  In Section \ref{sn:3eqmasses} we study the bifurcations in the case of three equal masses. In Section \ref{sn:2eqmasses} we study the bifurcations in the case of two pairs of equal masses. 
\section{Equations of central configurations in terms of mutual distances}
\label{sn:equations}
\subsection{Dziobeck equations}
For $ n = 4 $ there are six mutual distances. A necessary and sufficient condition that six positive numbers $ r _{ ij } $, $ 1 \leq i < j \leq 4$,  are the mutual distances between four coplanar points is 
\[S = \begin{bmatrix}
        0 & 1 & 1 & 1 & 1 \\
        1 & 0 & r _{ 12 } ^2 & r _{ 13 } ^2 & r _{ 14 } ^2 \\
        1 & r _{ 12 } ^2 & 0 & r _{ 23 } ^2 & r _{ 24 } ^2 \\
        1 & r _{ 13 } ^2 & r _{ 23 } ^2 & 0 & r _{ 34 } ^2 \\
        1 & r _{ 14 } ^2 & r _{ 24 } ^2 & r _{ 34 } ^2 & 0 \\
    \end{bmatrix}.
\]
This determinant is equal to  $ 288 V ^2 $, where $V$ is the volume of the tetrahedron whose six edges are  the mutual distances $ r _{ ij } $. This formula is the three-dimensional generalization of Heron's formula for the are of a triangle.  

Using Lagrange multipliers, Dziobeck characterized the central configurations of four bodies as the critical points of 
\[
V=U + \lambda _0 (I - I _0) + \mu S
\]
viewed as a function of   eight variables  $ \lambda _0 , \mu , r _{ 12 } , r _{ 13 } , r _{ 14 } , r _{ 23 } , r _{ 24 } , r _{ 34 }   $, 
subject to the constraints $ I = I _0 $ and $ S = 0 $. Here $ \lambda _0 $ and $ \mu $ are Lagrange multipliers and $ I _0 $ is a fixed moment of inertia. Hence, the central configurations are the solution of the following  eight equations: 

\begin{align*} 
    & \frac{\partial V } { \partial \lambda _0 }   = 0, \quad \frac{ \partial V } { \partial \mu }   = 0\\
    & \frac{ \partial V } {\partial  r _{ ij } }  = 0 \quad 1 \leq i < j \leq 4.
\end{align*} 
We will denote by $ F = (F _1 , \ldots , F _8) $ the equation obtained from the equations above by clearing the denominators, with the normalization $ I _0 = 1 $, and we will refer to them as Dziobek equations. Note that this equations give only the ``strictly planar'' configurations, that is the planar configurations that are not collinear.  
\subsection{Albouy-Chenciner equations\label{ssn:AC}}

The Albouy-Chenciner equations are algebraic equations satisfied by
the mutual distances $r_{ij}$ of every central configuration \cite{albouy_probleme_1997,hampton_finiteness_2006}  
\begin{equation}
\sum_{k=1}^n m_k
[S_{ik}(r_{jk}^2-r_{ik}^2-r_{ij}^2)+S_{jk}(r_{ik}^2-r_{jk}^2-r_{ij}^2)]=0
\label{eqn:AC}
\end{equation}
for $1\leq i< j\leq n$, where $S_{ik}$ and $S_{jk}$ are given by  When $m = m _{\ast \ast} $ the $ 4 \times 4 $ submatrix obtained from the Jacobian of the Albouy-Chenciner equations by deleting the last two rows and columns has non-zero determinant. 
\begin{equation}
S_{ij}= \frac{1}{r_{ij}^3}+\lambda'\quad (i\neq j), \quad
S_{ii}=0  \label{eqn:S}
\end{equation}
where $ \lambda' =  \lambda/M $.  
Since any relative equilibria may be rescaled, we will impose the normalization  $\lambda'=-1$. This, in this case,  can be assumed without loss of generality, however, this is not true in the vortex case, \cite{hampton_relative_2014}.
 After clearing the denominators in the $ S _{ ij } $ terms, these equations form a polynomial system in the $ r _{ ij } $ variables. These new equations are also called Albouy-Chenciner (AC) equations.

In the four body case the AC equations reduce to a system of six algebraic equations in six variables (the mutual distances). 
Note that the solutions of the Albouy-Chenciner equations for the four body problem include  collinear solutions, planar solutions and one three-dimensional solution (the regular tetrahedron). It follows that the number of solution of the AC equation is equal to the number of solutions of the Dziobeck ones plus 13, since it is well known that the number of collinear solutions in the four body problem is always 12.  

\section{Theoretical Background}
\label{sn:background}
In this section we review a few theoretical facts concerning  interval arithmetic, bifurcation theory  and group actions, that we will be useful in our analysis. 
\subsection{Interval arithmetic and the Krawczyk operator}
We discuss a method to find rigorous bounds on the solution of a nonlinear smooth function $ F : \mathbb{R}^n  \to \mathbb{R}^n  $. Let $ \mathbf{x} \in \mathbb{R}^n  $, and let $ [\mathbf{x} ]_r \subset \mathbb{R}^n  $ be the interval set centered at $\mathbf{x}$ with radius $ r > 0 $. Namely,
\[
    [\mathbf{x} ] _r = \{ \mathbf{y} \in \mathbb{R}^n  : \|\mathbf{y} - \mathbf{x} \|_{ \infty } \leq r \} , 
\]
where $ \|\cdot \| _{ \infty } $ is the infinity norm. Assume the derivative of $F$ at $ \mathbf{x} $, denoted by $ DF (\mathbf{x} ) $ is nonsingular, then the {\bf Krawczyk } operator of $F$ associated with $ [\mathbf{x} ] _r $ is defined as 
\[
    K (\mathbf{x}  , [\mathbf{x} ]_r) = \mathbf{x} - DF (\mathbf{x}) ^{ - 1 } F (\mathbf{x}) + [ I - DF( \mathbf{x}) ^{ - 1 } DF ([\mathbf{x} ]_r) ] ([\mathbf{x} ] _r - \mathbf{x}).       
\]
The Krawczyk operator can be used to test the existence and uniqueness of a zero in a set $ [ \mathbf{x} ] _r $ using the following theorem 

\begin{theorem}
   Let $ F : \mathbb{R}^n  \to \mathbb{R}^n  $ be a smooth nonlinear function 
   \begin{enumerate} 
       \item If $F$ has a root $ x ^\ast \in [ \mathbf{x} ] _r $  then 
           $ x ^\ast \in [\mathbf{x} ]_r \cap K (\mathbf{x} , [\mathbf{x} ] _r) $.
       \item If $  [\mathbf{x} ]_r \cap K (\mathbf{x} , [\mathbf{x} ] _r) = \varnothing $ then $F$ has no zeroes in $ [\mathbf{x} ] _r $.
       \item If $ \varnothing \neq K (\mathbf{x} , [ \mathbf{x} ]_r )$ is a subset of the interior of $ [ \mathbf{x} ] _r $ then $F$ contains a unique zero in $ \mathbf{x} $. 
   \end{enumerate}  
\end{theorem}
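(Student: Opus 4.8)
The plan is to recognize the Krawczyk operator as a mean value enclosure of a Newton-type map, and then to extract all three claims from a single inclusion together with a contraction argument. Since $DF(\mathbf{x})$ is nonsingular, set $C = DF(\mathbf{x})^{-1}$ and define the auxiliary map $g(\mathbf{y}) = \mathbf{y} - C\,F(\mathbf{y})$. Because $C$ is invertible, $g(\mathbf{y}) = \mathbf{y}$ if and only if $F(\mathbf{y}) = 0$, so the zeros of $F$ in $[\mathbf{x}]_r$ are exactly the fixed points of $g$ in $[\mathbf{x}]_r$. Its Jacobian is $Dg(\mathbf{z}) = I - C\,DF(\mathbf{z})$, so that $g(\mathbf{x}) = \mathbf{x} - DF(\mathbf{x})^{-1}F(\mathbf{x})$ and the interval enclosure of $Dg$ over the box is $Dg([\mathbf{x}]_r) = I - DF(\mathbf{x})^{-1}DF([\mathbf{x}]_r)$. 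Comparing with the definition, one sees that $K(\mathbf{x}, [\mathbf{x}]_r) = g(\mathbf{x}) + Dg([\mathbf{x}]_r)\,([\mathbf{x}]_r - \mathbf{x})$.

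The crux is the \emph{mean value inclusion}: for every $\mathbf{y} \in [\mathbf{x}]_r$, the $C^1$ hypothesis on $F$ and the integral form of the mean value theorem give $g(\mathbf{y}) = g(\mathbf{x}) + \left(\int_0^1 Dg(\mathbf{x} + t(\mathbf{y}-\mathbf{x}))\,dt\right)(\mathbf{y}-\mathbf{x})$. Since the whole segment lies in the convex box $[\mathbf{x}]_r$, the averaged Jacobian belongs to the interval matrix $Dg([\mathbf{x}]_r)$, and $\mathbf{y}-\mathbf{x} \in [\mathbf{x}]_r - \mathbf{x}$; hence $g(\mathbf{y}) \in K(\mathbf{x}, [\mathbf{x}]_r)$. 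In words, $K$ encloses the image $g([\mathbf{x}]_r)$. Claims 1 and 2 then follow immediately. If $F(\mathbf{x}^\ast) = 0$ with $\mathbf{x}^\ast \in [\mathbf{x}]_r$, then $\mathbf{x}^\ast$ is a fixed point of $g$, so $\mathbf{x}^\ast = g(\mathbf{x}^\ast) \in K$ by the inclusion, giving $\mathbf{x}^\ast \in [\mathbf{x}]_r \cap K$; this is Claim 1. Claim 2 is its contrapositive: if $[\mathbf{x}]_r \cap K = \varnothing$, then no such $\mathbf{x}^\ast$ can exist.

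For Claim 3, assume $\varnothing \neq K \subseteq \mathrm{int}\,[\mathbf{x}]_r$. The inclusion shows that $g$ maps the complete set $[\mathbf{x}]_r$ into $K \subseteq [\mathbf{x}]_r$. The main point---and the step I expect to be the real work---is to convert the topological condition $K \subseteq \mathrm{int}\,[\mathbf{x}]_r$ into a contraction estimate. Writing $A = Dg([\mathbf{x}]_r)$ and using $[\mathbf{x}]_r - \mathbf{x} = [-r,r]^n$, the $i$-th coordinate of $K$ is the interval centered at $g(\mathbf{x})_i$ with radius $r\sum_j |A|_{ij}$, where $|A|$ is the magnitude (entrywise largest absolute value) matrix. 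The inclusion $K \subseteq \mathrm{int}\,[\mathbf{x}]_r$ forces $|g(\mathbf{x})_i - \mathbf{x}_i| + r\sum_j |A|_{ij} < r$ for each $i$, and in particular $\sum_j |A|_{ij} < 1$, i.e. $\|A\|_\infty < 1$.

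Consequently, every matrix whose entries lie in $A$ has infinity norm strictly below $1$, since $\|M\|_\infty = \max_i \sum_j |M_{ij}| \leq \|A\|_\infty < 1$. By the mean value theorem the increment $g(\mathbf{y}) - g(\mathbf{z})$ equals such a matrix applied to $\mathbf{y}-\mathbf{z}$ for $\mathbf{y},\mathbf{z} \in [\mathbf{x}]_r$, so $g$ is a contraction on $[\mathbf{x}]_r$ with Lipschitz constant $\|A\|_\infty < 1$. The Banach fixed point theorem then yields a unique fixed point $\mathbf{x}^\ast \in [\mathbf{x}]_r$, which is precisely the unique zero of $F$ in $[\mathbf{x}]_r$, establishing existence and uniqueness simultaneously. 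The only genuinely delicate estimate is the passage from $K \subseteq \mathrm{int}\,[\mathbf{x}]_r$ to $\|A\|_\infty < 1$; once that is in hand, Banach's theorem does the rest, and the $C^1$ smoothness of $F$ guarantees the validity of the mean value representation used throughout.
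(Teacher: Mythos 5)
Your proof is correct, but note that the paper does not actually prove this theorem: it only remarks that the statement ``is essentially a fixed point theorem'' and defers the proof to Neumaier's book, so your argument is a genuinely self-contained addition rather than a rederivation of the paper's own proof. Your route is the standard Krawczyk mean-value inclusion $g([\mathbf{x}]_r)\subseteq K(\mathbf{x},[\mathbf{x}]_r)$ for $g(\mathbf{y})=\mathbf{y}-DF(\mathbf{x})^{-1}F(\mathbf{y})$, which immediately gives claims 1 and 2, followed by a single Banach-contraction argument for claim 3. The one place where you diverge from the most common textbook treatment is existence: the classical argument applies Brouwer's fixed point theorem, which needs only $K\subseteq[\mathbf{x}]_r$ (no interior condition and no contraction estimate), and then uses the strict inclusion only for uniqueness; you instead extract the quantitative bound $\max_i\sum_j |A|_{ij}<1$ for $A=I-DF(\mathbf{x})^{-1}DF([\mathbf{x}]_r)$ directly from $K\subseteq\operatorname{int}[\mathbf{x}]_r$, and get existence and uniqueness in one stroke from Banach. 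That extraction is airtight here precisely because $[\mathbf{x}]_r$ is a cube of uniform radius $r$, so each component of $K$ is an interval of radius $r\sum_j|A|_{ij}$ centered at $g(\mathbf{x})_i$ and the strict inclusion forces $|g(\mathbf{x})_i-x_i|+r\sum_j|A|_{ij}<r$; for a general box with unequal side lengths you would need a weighted norm, which is worth flagging if you ever reuse the argument. Two cosmetic points: your $\|A\|_\infty$ is really the norm of the magnitude matrix $|A|$ (any real matrix with entries in the intervals $A_{ij}$ satisfies the same bound, which is what the contraction step uses), and the theorem's phrase ``unique zero in $\mathbf{x}$'' is a typo for ``in $[\mathbf{x}]_r$,'' which is what you correctly prove.
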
 
This is essentially a fixed point theorem. A proof is given in \cite{neumaier_interval_2008}. Using this theorem it is possible to implement  code to find bounds on roots of nonlinear equations. We wrote the code for Sage \cite{stein_sage_2014}, using  Sage arbitrary precision real intervals. Sage real intervals are based on the  Multiple Precision Floating-point Interval library (MPFI)  by Nathalie Revol and Fabrice Rouillier.

An interval  $[a,b] $ will often be written as a standard floating-point number with a question mark (for instance, $ 3.1416?$ ). 
The question mark indicates that the preceding digit may have an error of $ \pm 1 $. Note that in such cases usually a more precise bound is known, but it is not displayed to save space.

\subsection{Bifurcations}\label{section:bifurcations}
The saddle-node, transcritical and pitchfork bifurcations are the most important types of bifurcations that occur in system with a
system whose linearization has a one dimensional null-space.  
Let 
\[F : \mathbb{R}  ^n \times \mathbb{R}  \to \mathbb{R}^n:(\mathbf{x} , \mu )\to F (\mathbf{x} , \mu)    \]
be a smooth map, where $ \mu $ is a parameter.  We use $ D F $ to denote the Jacobian matrix, and $ F _\mu $ to denote the vector of partial derivatives of the components of $F$ with respect to $\mu$. We are interested in studying how  the number of solutions of the system $ F (\mathbf{x} , \mu) = 0 $ varies as $\mu$ varies. We have the following useful theorem, a proof of which can be found in \cite{sotomayor_generic_1973}. 
\begin{theorem}\label{thm:sotomayor} Suppose that $ F (x _0 , \mu _0) = 0 $ and that the Jacobian matrix  $A= DF (\mathbf{x} _0 , \mu _0) $ has a simple eigenvalue $ \lambda = 0 $ with eigenvector $\mathbf{v} $, and that the matrix $ A ^T $ has an eigenvector $ \mathbf{w} $ corresponding to the eigenvalue $ \lambda = 0$. Then 
\begin{enumerate}
    \item If $ \mathbf{w} ^T F _\mu (\mathbf{x} _0 , \mu _0) \neq 0 , \quad \mathbf{w} ^T [D ^2 F (\mathbf{x} _0 , \mu _0) (\mathbf{v} , \mathbf{v})   ] \neq 0 $, then the system experiences a fold bifurcation at the equilibrium point $ \mathbf{x} _0 $ as the parameter $ \mu $ passes through the bifurcation value $ \mu = \mu _0 $.   
    \item If 
        \begin{align*}
            &\mathbf{w} ^T F _\mu (\mathbf{x} _0 , \mu _0) = 0\\
            &\mathbf{w} ^T [D F _\mu (\mathbf{x} _0 , \mu _0) \mathbf{v}  ]\neq 0\\
            &\mathbf{w}^T[D ^2 F (\mathbf{x} _0 , \mu _0) (\mathbf{v} , \mathbf{v})] \neq 0    
        \end{align*} 
        then the system experiences a transcritical bifurcation at the equilibrium point $ \mathbf{x} _0 $ as the parameter $ \mu $  passes through the bifurcation value $ \mu = \mu _0 $. 
    \item If 
        \begin{align*}
            &\mathbf{w} ^T F _\mu (\mathbf{x} _0 , \mu _0) = 0\\
            &\mathbf{w} ^T [D F _\mu (\mathbf{x} _0 , \mu _0) \mathbf{v}  ]\neq 0\\
            &\mathbf{w}^T[D ^2 F (\mathbf{x} _0 , \mu _0) (\mathbf{v} , \mathbf{v})] = 0 \\
            &\mathbf{w}^T[D ^3 F (\mathbf{x} _0 , \mu _0) (\mathbf{v} , \mathbf{v})] \neq 0     
        \end{align*}   
        then the system experiences a pitchfork bifurcation at the equilibrium point $ \mathbf{x} _0 $ as the parameter $ \mu $  passes through the bifurcation value $ \mu = \mu _0 $. If $  \mathbf{w}^T[D ^3 F (\mathbf{x} _0 , \mu _0) (\mathbf{v} , \mathbf{v})] <0 $ the branches occur for $ \mu> \mu _0 $, and the bifurcation is supercritical. Otherwise, the branches occur for $ \mu < \mu _0 $ and the bifurcation is subcritical.  

\end{enumerate} 
\end{theorem}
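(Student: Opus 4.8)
The plan is to reduce the $n$-dimensional equation $F(\mathbf{x},\mu)=0$ near the degenerate equilibrium $(\mathbf{x}_0,\mu_0)$ to a single scalar equation by a Lyapunov--Schmidt reduction, and then to read off the three hypotheses as exactly the nondegeneracy conditions that identify the reduced scalar equation with the normal form of a fold, a transcritical, or a pitchfork bifurcation. I abbreviate $A=DF(\mathbf{x}_0,\mu_0)$ and use subscripts for partial derivatives.

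Since $\lambda=0$ is a \emph{simple} eigenvalue of $A$, the kernel is $\ker A=\mathbb{R}\,\mathbf{v}$, while the range of $A$ equals $(\ker A^{T})^{\perp}=\mathbf{w}^{\perp}$ and has dimension $n-1$. Let $\Pi$ be the orthogonal projection onto the range of $A$, so that $(I-\Pi)\mathbf{y}=\frac{\mathbf{w}^{T}\mathbf{y}}{\mathbf{w}^{T}\mathbf{w}}\,\mathbf{w}$, and split $\mathbf{x}=\mathbf{x}_0+\xi\mathbf{v}+\mathbf{z}$ with $\xi\in\mathbb{R}$ and $\mathbf{z}\in\mathbf{v}^{\perp}$. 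The equation $F=0$ is then equivalent to the system
\[
\Pi F(\mathbf{x}_0+\xi\mathbf{v}+\mathbf{z},\mu)=0,\qquad \mathbf{w}^{T}F(\mathbf{x}_0+\xi\mathbf{v}+\mathbf{z},\mu)=0 .
\]
The derivative of the first equation in $\mathbf{z}$ at the base point is $\Pi A|_{\mathbf{v}^{\perp}}$, an isomorphism from $\mathbf{v}^{\perp}$ onto the range of $A$; hence by the implicit function theorem it can be solved for $\mathbf{z}=\mathbf{z}(\xi,\mu)$ with $\mathbf{z}(0,\mu_0)=0$. Substituting yields the scalar \emph{bifurcation function}
\[
g(\xi,\mu)=\mathbf{w}^{T}F\bigl(\mathbf{x}_0+\xi\mathbf{v}+\mathbf{z}(\xi,\mu),\mu\bigr),
\]
whose zeros correspond bijectively to the solutions of $F=0$ near $(\mathbf{x}_0,\mu_0)$.

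The core computation is the low-order Taylor expansion of $g$ at $(0,\mu_0)$. Differentiating the range equation shows $\mathbf{z}_\xi(0,\mu_0)=0$, and because $\mathbf{w}^{T}A=0$ every term carrying a factor $A\mathbf{z}_\xi$ or $A\mathbf{z}_{\xi\xi}$ drops out. This gives the automatic degeneracies $g(0,\mu_0)=0$ and $g_\xi(0,\mu_0)=0$, together with the clean identities
\[
g_\mu(0,\mu_0)=\mathbf{w}^{T}F_\mu,\qquad g_{\xi\xi}(0,\mu_0)=\mathbf{w}^{T}\bigl[D^{2}F(\mathbf{v},\mathbf{v})\bigr],
\]
and, after one more differentiation, $g_{\xi\mu}(0,\mu_0)$ and $g_{\xi\xi\xi}(0,\mu_0)$ expressed through $\mathbf{w}^{T}[DF_\mu\mathbf{v}]$, $\mathbf{w}^{T}[D^{3}F(\mathbf{v},\mathbf{v},\mathbf{v})]$ and the implicitly determined derivatives of $\mathbf{z}$. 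I would then match the hypotheses to normal forms. In case (1) we have $g_\mu\neq0$ and $g_{\xi\xi}\neq0$, so the implicit function theorem solves $\mu=\mu(\xi)$ with a nondegenerate critical point, giving the parabola $\mu-\mu_0\sim-\tfrac{g_{\xi\xi}}{2g_\mu}\,\xi^{2}$: a fold. In case (2) the extra condition $g_\mu=0$ removes the fold, the leading part of $g$ becomes the quadratic form $\tfrac12 g_{\xi\xi}\xi^{2}+g_{\xi\mu}\xi(\mu-\mu_0)+\cdots$, and $g_{\xi\xi}\neq0$, $g_{\xi\mu}\neq0$ produce two solution branches crossing transversally at $(\mathbf{x}_0,\mu_0)$: a transcritical bifurcation. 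In case (3) the quadratic coefficient also vanishes, so the balance is between the cubic $\tfrac16 g_{\xi\xi\xi}\xi^{3}$ and the mixed term $g_{\xi\mu}\xi(\mu-\mu_0)$, and one obtains the pitchfork $\xi^{2}\sim-\tfrac{6g_{\xi\mu}}{g_{\xi\xi\xi}}(\mu-\mu_0)$; its side, and therefore the super- versus sub-criticality, is fixed by the sign of $g_{\xi\xi\xi}(0,\mu_0)$, equivalently of $\mathbf{w}^{T}[D^{3}F(\mathbf{v},\mathbf{v},\mathbf{v})]$. Rigour is supplied by dividing out the appropriate power of $\xi$ and applying the implicit function theorem, or by the Malgrange preparation theorem.

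The main obstacle is the bookkeeping in the last step together with one genuinely subtle point about the pitchfork. The higher reduced derivatives are not literally the stated quantities: $g_{\xi\mu}$ carries a correction $\mathbf{w}^{T}D^{2}F(\mathbf{v},\mathbf{z}_\mu)$ and $g_{\xi\xi\xi}$ carries a correction $3\mathbf{w}^{T}D^{2}F(\mathbf{v},\mathbf{z}_{\xi\xi})$ coming from the dependence of $\mathbf{z}$ on the variables, so one must verify that the invariantly stated expressions are nonetheless the coefficients that control the normal form. More importantly, a clean transcritical and, especially, a pure pitchfork require the vanishing of the pure parameter derivatives $g_{\mu\mu}(0,\mu_0)$ (otherwise a spurious linear branch appears and case (3) degenerates); this vanishing does not follow from the four stated conditions alone but is guaranteed by the persistence of a trivial, symmetric branch $g(0,\mu)\equiv0$. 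In the applications of this paper that branch is exactly the family of $\mathbb{Z}_2$-symmetric configurations, so the factorization $g(\xi,\mu)=\xi\,h(\xi,\mu)$ is available and the pitchfork analysis closes; establishing this factorization from the equivariance, and confirming that the correction terms do not alter the stated nondegeneracy, is where the real work lies.
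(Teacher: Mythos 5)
You could not have known this, but the paper contains no proof of this theorem: it is transcribed in the form found in Perko's textbook and the proof is explicitly deferred to the cited reference \cite{sotomayor_generic_1973}. So there is no internal argument to compare yours against; what follows judges your proposal on its own terms and against the way Theorem~\ref{thm:sotomayor} is actually deployed later in the paper. Your Lyapunov--Schmidt reduction is the standard route (essentially that of the cited literature), and your case (1) is complete and correct: the observation that $\mathbf{z}_\xi(0,\mu_0)=0$ is exactly what makes $g_\mu(0,\mu_0)=\mathbf{w}^{T}F_\mu$ and $g_{\xi\xi}(0,\mu_0)=\mathbf{w}^{T}[D^{2}F(\mathbf{v},\mathbf{v})]$ hold without correction terms, after which the implicit function theorem gives the fold parabola.

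The subtlety you flag for cases (2) and (3) is genuine, and it is worth making it concrete: the transcritical case is \emph{false} as literally stated. Take $n=1$ and $F(x,\mu)=x^{2}+x\mu+\mu^{2}$ at the origin: then $\mathbf{w}^{T}F_\mu=0$, $\mathbf{w}^{T}[DF_\mu\mathbf{v}]=1\neq0$, $\mathbf{w}^{T}[D^{2}F(\mathbf{v},\mathbf{v})]=2\neq0$, yet $F^{-1}(0)=\{(0,0)\}$, so no bifurcation occurs at all; the unconstrained term $g_{\mu\mu}$ makes the quadratic form definite, which is precisely the degeneration you predict. Your second worry, the corrections $\mathbf{w}^{T}D^{2}F(\mathbf{v},\mathbf{z}_\mu)$ in $g_{\xi\mu}$ and $3\,\mathbf{w}^{T}D^{2}F(\mathbf{v},\mathbf{z}_{\xi\xi})$ in $g_{\xi\xi\xi}$, is also real; note that it vanishes identically in the classical setting where these cases are clean, namely a fixed trivial branch $F(\mathbf{x}_0,\mu)\equiv0$, since then $F_\mu(\mathbf{x}_0,\mu_0)=0$ as a vector, forcing $\mathbf{z}_\mu=0$ so that $g_{\xi\mu}=\mathbf{w}^{T}[DF_\mu\mathbf{v}]$ exactly. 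So the deficiency lies in the statement being proved rather than in your method, and your diagnosis of the cure matches how the paper actually argues: it never invokes case (3) by itself, but always in combination with Lemma~\ref{lem:Z2_symmetry} and Lemma~\ref{lem:sym_alternative} ($\mathbb{Z}_2$-equivariance, $R\mathbf{x}_0=\mathbf{x}_0$, $R\mathbf{v}=-\mathbf{v}$), after certifying the symmetry of the bifurcating solution with the Krawczyk operator; under that symmetry the reduced function is odd in $\xi$, your factorization $g=\xi\,h(\xi,\mu)$ is available, and the pitchfork analysis closes (with the nondegeneracy conditions properly read as conditions on the reduced coefficients $g_{\xi\mu}$ and $g_{\xi\xi\xi}$, as you indicate, rather than on the raw quantities in the statement).
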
 
\subsection{Group actions and equivariant bifurcation theory}
\begin{definition}
   Let $M$ be a manifold and let $G$ be a group. A {\bf action } of a group $G$ on $M$ is a map $ \Phi :G \times M \to M $ such that: 
   \begin{enumerate}[(i)]
       \item $ \Phi (E , x) = x $, for all $ x \in M $,  where $E$ is the identity element of $G$; and
       \item $ \Phi (g , \phi (h, x)) = \Phi (gh , x) $ for all $ g, h \in G $ and $ x \in M $.      
   \end{enumerate} 
\end{definition}

For every $ g \in G $ let $ \Phi _g : M \to M : x\to \Phi (g, x) $; then (i) becomes $ \Phi _E = \operatorname{id} _M $ while  (ii) becomes $ \Phi _{ gh } = \Phi _g \circ \Phi _h $. 
In the special but important case where  $M$ is a vector space $V$  and each $ \Phi _g $ a linear transformation, the action of $G$ on $V$ is called a {\bf linear representation } of $G$ on $V$.

\begin{definition}
   Let  $ M $ and $N$ be manifolds and let $ \Phi : G \times M \to M $ , $ \Psi : G \times N \to N $ be two actions. Assume that   $ F : M \to N $ is a smooth function, then  we say that $F$ is {\bf equivariant }  with respect to these actions if  for all $ g \in G $ 
   \[
       F \circ \Phi _g = \Psi _g \circ  F.
   \]  
\end{definition}
\begin{definition}
   Let $G$ be a group acting on $M$. The {\bf isotropy subgroup } of any $ x \in M $ is 
   \[
       \Sigma  _x := \{ g \in G : \Phi _g (x)   = x \} \subset G.
   \]  
   If $ \Sigma  _x$  is nontrivial then $x$ is called an {\bf isotropic point }.
\end{definition} 
\begin{definition}
   Let $ \Sigma $ be a subgroup of $G$ where $G$ is a compact Lie group acting on a vector space $V$. The {\bf fixed point subspace } of $\Sigma$ is 
   \[
       \operatorname{Fix } (\Sigma) = \{ x \in V : \Phi _{ g } (x) = x , \forall g \in \Sigma \}.   
   \] 
\end{definition} 

We are interested in the case  where the group $G=\mathbb{Z}  _2$ and $ \{ I, R \} $ is a linear representation of $ \mathbb{Z}  _2 $ in $\mathbb{R}^n$ , where $I$ is the identity and $ R $ is an $ n \times n $ matrix satisfying 
\[
R ^2 = I. 
\]
We want to show that if $ x _0 $ is $\mathbb{Z}  _2 $-symmetric, that is $ R x _0  = x _0 $ then the symmetry can be  helpful in  determining the type of bifurcation. 

\begin{lemma}\label{lem:Z2_symmetry}
   Let $ F : \mathbb{R}^n  \times \mathbb{R}  \to \mathbb{R}^n:(\mathbf{x} , \mu) \to F (\mathbf{x}  , \mu)$ be a smooth function. Suppose that $ F $ is  $ \mathbb{Z}  _2 $-equivariant for each $ \mu $ , that is $ F(R \mathbf{x} , \mu) = R F (\mathbf{x} , \mu  )$ for every $ \mu $, and let $ \mathbf{x}  _0 $ such that $ R \mathbf{x}  _0 = \mathbf{x}  _0 $. Let $ F (\mathbf{x}  _0 , \mu .0) = 0 $ and let   $ A=D F(\mathbf{x}  _0 , \mu _0 ) $. Suppose that $ A $ has a simple eigenvalue $ \lambda = 0 $ with eigenvector $ \mathbf{v} $ such that $ R \mathbf{v} = - \mathbf{v} $, and that $ A ^T $ has an eigenvector $ \mathbf{w} $ corresponding to $ \lambda = 0 $. Then 
   \begin{align*}
       & \mathbf{w} ^T F _\mu (\mathbf{x} _0 , \mu _0) = 0 \\
       & \mathbf{w} ^T [ D ^2 F (\mathbf{x} _0 , \mu _0) (\mathbf{v} , \mathbf{v})] = 0   
   \end{align*} 
\end{lemma}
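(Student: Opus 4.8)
The plan is to extract, from the equivariance hypothesis, three linear-algebraic facts and then combine them. The starting point is to differentiate the identity $F(R\mathbf{x},\mu)=RF(\mathbf{x},\mu)$ in the two natural directions. Differentiating in $\mathbf{x}$ and evaluating at the fixed point $\mathbf{x}_0$ (where $R\mathbf{x}_0=\mathbf{x}_0$) gives $AR=RA$, so $A$ and $R$ commute; taking the transpose yields $A^T R^T=R^T A^T$. Differentiating instead in $\mu$ and evaluating at $(\mathbf{x}_0,\mu_0)$ gives $F_\mu(\mathbf{x}_0,\mu_0)=RF_\mu(\mathbf{x}_0,\mu_0)$, i.e. the $\mu$-derivative lies in the $+1$ eigenspace of $R$. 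Finally, differentiating twice in $\mathbf{x}$ produces the bilinear identity $D^2F(\mathbf{x}_0,\mu_0)(R\mathbf{a},R\mathbf{b})=R\,D^2F(\mathbf{x}_0,\mu_0)(\mathbf{a},\mathbf{b})$ for all $\mathbf{a},\mathbf{b}$.

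The key preliminary step, and the part I expect to require the most care, is to pin down how $R^T$ acts on the left eigenvector $\mathbf{w}$. From $A^TR^T=R^TA^T$ and $A^T\mathbf{w}=0$ we get $A^T(R^T\mathbf{w})=0$, so $R^T\mathbf{w}$ also lies in $\ker A^T$. Because $\lambda=0$ is a \emph{simple} eigenvalue of $A$ (hence of $A^T$), this kernel is one-dimensional, forcing $R^T\mathbf{w}=c\,\mathbf{w}$; and $(R^T)^2=(R^2)^T=I$ then gives $c=\pm1$. To fix the sign I would use the standard fact that for a simple eigenvalue the left and right eigenvectors are not orthogonal, so $\mathbf{w}^T\mathbf{v}\neq 0$ (if $\mathbf{w}^T\mathbf{v}=0$ then $\mathbf{w}^T$ would annihilate both $\operatorname{span}(\mathbf{v})=\ker A$ and $\operatorname{im}A$, hence all of $\mathbb{R}^n=\ker A\oplus\operatorname{im}A$, a contradiction). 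Then compute $\mathbf{w}^T R\mathbf{v}$ two ways: on one hand $\mathbf{w}^T R\mathbf{v}=\mathbf{w}^T(-\mathbf{v})=-\mathbf{w}^T\mathbf{v}$ using $R\mathbf{v}=-\mathbf{v}$, and on the other $\mathbf{w}^T R\mathbf{v}=(R^T\mathbf{w})^T\mathbf{v}=c\,\mathbf{w}^T\mathbf{v}$. Since $\mathbf{w}^T\mathbf{v}\neq0$, this yields $c=-1$, i.e. $R^T\mathbf{w}=-\mathbf{w}$, equivalently $\mathbf{w}^TR=-\mathbf{w}^T$.

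With $\mathbf{w}^TR=-\mathbf{w}^T$ in hand, both assertions fall out by a sign-flip argument. For the first, using $F_\mu(\mathbf{x}_0,\mu_0)=RF_\mu(\mathbf{x}_0,\mu_0)$,
\[
\mathbf{w}^T F_\mu(\mathbf{x}_0,\mu_0)=\mathbf{w}^T R\,F_\mu(\mathbf{x}_0,\mu_0)=-\mathbf{w}^T F_\mu(\mathbf{x}_0,\mu_0),
\]
so $\mathbf{w}^T F_\mu(\mathbf{x}_0,\mu_0)=0$. For the second, set $\mathbf{a}=\mathbf{b}=\mathbf{v}$ in the bilinear identity and use $R\mathbf{v}=-\mathbf{v}$: the two sign factors cancel on the left, giving $D^2F(\mathbf{x}_0,\mu_0)(\mathbf{v},\mathbf{v})=R\,D^2F(\mathbf{x}_0,\mu_0)(\mathbf{v},\mathbf{v})$, so this vector is fixed by $R$. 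Contracting with $\mathbf{w}^T$ and applying $\mathbf{w}^TR=-\mathbf{w}^T$ again forces $\mathbf{w}^T\!\left[D^2F(\mathbf{x}_0,\mu_0)(\mathbf{v},\mathbf{v})\right]=0$, completing the proof.

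A point worth emphasizing is that nothing in this argument uses orthogonality of $R$; only $R^2=I$, the commutation $AR=RA$, and simplicity of the eigenvalue are needed. The simplicity hypothesis is doing the essential work, both in guaranteeing $\dim\ker A^T=1$ (so that $R^T\mathbf{w}$ must be a multiple of $\mathbf{w}$) and in supplying $\mathbf{w}^T\mathbf{v}\neq0$ (so that the sign is forced to be $-1$ rather than left undetermined).
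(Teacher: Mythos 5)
Your proof is correct and follows essentially the same route as the paper's: differentiate the equivariance identity once in $\mu$ and twice in $\mathbf{x}$, establish $\mathbf{w}^T R = -\mathbf{w}^T$, and then both quantities vanish by a sign-flip argument. The one difference is that where the paper merely asserts the key intermediate fact with the phrase ``the symmetry of the kernel and cokernel are the same,'' you supply the justification it omits --- deriving $AR=RA$ from equivariance, using simplicity of the zero eigenvalue to force $R^T\mathbf{w}=\pm\mathbf{w}$, and pinning the sign via $\mathbf{w}^T\mathbf{v}\neq 0$ --- so your write-up is actually more complete than the published argument at precisely its least rigorous point.
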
 
\begin{proof}
    We prove that first expression is zero.   Differentiating   $ F(R \mathbf{x} , \mu) = R F (\mathbf{x} , \mu  )$  with respect to $ \mu $ at  the point $ (\mathbf{x} _0 , \mu _0) $,  and using the fact that $ R \mathbf{x} _0 = \mathbf{x} _0 $, yields
    \[
        F _\mu ( \mathbf{x} _0  , \mu _0 ) = R F _\mu  (\mathbf{x} _0  , \mu _0   ).
    \]
    Since the symmetry of the kernel and cokernel are the same $ R \mathbf{v} = - \mathbf{v} $ implies  $ \mathbf{w} ^T R = - \mathbf{w} ^T $. Thus, applying $ \mathbf{w} ^T $ to the left of the equation above, we obtain
    \[  
        \mathbf{w} ^T F _\mu  ( \mathbf{x} _0  , \mu _0 ) =  \mathbf{w} ^T R F _\mu  (\mathbf{x} _0  , \mu _0   )=- \mathbf{w} ^T  F _\mu  (\mathbf{x} _0  , \mu _0   ).
    \]

      We now show that the second expression is zero. Differentiating  $ F(R \mathbf{x} , \mu) = R F (\mathbf{x} , \mu  )$  with respect to $\mathbf{x}$ yields 
      \[
         DF(R \mathbf{x} , \mu )(R  \mathbf{u})= R [DF (\mathbf{x}  )(\mathbf{u})]   
      \]
     differentiating again and computing the derivative at $ (\mathbf{x} _0 , \mu) $:
     \[
         D ^2 F (R \mathbf{x _0 }, \mu _0 ) (R \mathbf{u} , R \mathbf{v}) = R [D ^2 F (\mathbf{x _0 }, \mu _0 ) (\mathbf{u} , \mathbf{v}) ]     
     \]
   If we apply $ \mathbf{w} ^T $ on the left put $ \mathbf{u} = \mathbf{v} $ and assume $ R \mathbf{x _0 } = \mathbf{x _0 } $  from the equation above we obtain
   \begin{align*} 
       \mathbf{w} ^T &  [D ^2 F (\mathbf{x} _0 , \mu _0)  (\mathbf{v} , \mathbf{v})]  = \mathbf{w} ^T [D ^2 F (\mathbf{x} _0 , \mu _0) (-\mathbf{v} , -\mathbf{v})] \\
       &  = \mathbf{w} ^T [D ^2 F (\mathbf{x} _0 , \mu _0) (R\mathbf{v} , R\mathbf{v})] = \mathbf{w} ^T R [D ^2 F(\mathbf{x} _0 , \mu _0 ) (\mathbf{v} , \mathbf{v})   ]\\
       &  =- \mathbf{w} ^T [D ^2 F (\mathbf{x} _0 , \mu _0) (\mathbf{v} , \mathbf{v})   ]
   \end{align*} 
\end{proof}

Another useful result  is the following (see \cite{kuznetsov_elements_2004}):

\begin{lemma}\label{lem:sym_alternative}
   Suppose $ T: \mathbb{R}^n  \to \mathbb{R}^n$ is a linear operator and $A$ is its matrix representation. If $T $ is $ \mathbb{Z}  _2 $-equivariant then $ AR = RA $. Suppose that the kernel of $A$ is one-dimensional, then  $ A \mathbf{v} = 0 $  implies $ R \mathbf{v} = \mathbf{v} $ or $ R \mathbf{v} = - \mathbf{v} $. 

\end{lemma}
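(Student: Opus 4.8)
The plan is to prove the two assertions in turn, both of which reduce to elementary linear algebra once the $\mathbb{Z}_2$-equivariance hypothesis is translated into a commutation relation between $A$ and $R$.

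First I would establish $AR = RA$. By the definition of equivariance, $T$ being $\mathbb{Z}_2$-equivariant with respect to the representation $\{I, R\}$ means $T \circ \Phi_g = \Psi_g \circ T$ for every $g \in \mathbb{Z}_2$; taking $g$ to be the nontrivial element (so that both $\Phi_g$ and $\Psi_g$ are represented by $R$) gives $T(R\mathbf{x}) = R\,T(\mathbf{x})$ for all $\mathbf{x} \in \mathbb{R}^n$. Writing $T$ in terms of its matrix $A$, this reads $AR\mathbf{x} = RA\mathbf{x}$ for every $\mathbf{x}$, and since it holds for all $\mathbf{x}$ the matrices themselves agree: $AR = RA$.

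Next I would use this commutation to constrain the kernel. Suppose $A\mathbf{v} = 0$ with $\mathbf{v} \neq 0$. Applying $R$ and commuting gives $A(R\mathbf{v}) = R A \mathbf{v} = 0$, so $R\mathbf{v} \in \ker A$; that is, $R$ maps the kernel into itself. Because $\ker A$ is one-dimensional and spanned by $\mathbf{v}$, there must be a scalar $c$ with $R\mathbf{v} = c\mathbf{v}$. Finally I would invoke the involution property $R^2 = I$: applying $R$ once more yields $\mathbf{v} = R^2\mathbf{v} = c^2\mathbf{v}$, and since $\mathbf{v} \neq 0$ this forces $c^2 = 1$, hence $c = \pm 1$. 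Therefore $R\mathbf{v} = \mathbf{v}$ or $R\mathbf{v} = -\mathbf{v}$, which is the desired conclusion.

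There is no genuine obstacle in this argument; the only point requiring a moment's care is recognizing that the commutation $AR = RA$ makes $\ker A$ an $R$-invariant subspace, so that one-dimensionality forces $R\mathbf{v}$ to be a scalar multiple of $\mathbf{v}$, after which the condition $R^2 = I$ pins the scalar down to $\pm 1$. In particular, the eigenvector $\mathbf{v}$ is automatically either symmetric or antisymmetric under $R$, which is exactly the dichotomy exploited together with Lemma \ref{lem:Z2_symmetry} in the bifurcation analysis.
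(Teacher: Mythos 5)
Your proof is correct. Note that the paper itself gives no proof of this lemma at all — it simply cites \cite{kuznetsov_elements_2004} — so your argument supplies the missing details, and it is exactly the standard one: equivariance of $T$ at the nontrivial group element gives $AR=RA$, commutation makes $\ker A$ an $R$-invariant subspace, one-dimensionality forces $R\mathbf{v}=c\mathbf{v}$, and the involution property $R^2=I$ (which is part of the paper's definition of the representation $\{I,R\}$) pins down $c=\pm 1$. Every step is sound and nothing is missing.
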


\section{The Case of Three Equal Masses}
\label{sn:3eqmasses} 
In this section we study the bifurcations of the four body problem with three equal masses. In the first subsection we show that the equation of the central configurations are equivariant with respect to the group $ D _6 $. In the following section we give an overview of the three bifucations we found. In the last three subsections of this section we analyze each of the bifurcations in detail.
\subsection{Equivariance}
Recall that   the  dihedral group of order six, is a group with six elements  $D _6 =\{   E, g _1 , g _2 , g _3 , g _4, g _5 \}  $ with  $ g _3 = g _1 g _2 g _1$, $ g _4 = g _1 g _2 $, $ g _5 = g _2 g _1 $ and  Cayley table  

\[\begin{array}{c|cccccc} 
     \circ  &  E   &   g _1  &  g _2  &  g _3  &  g _4  &  g _5  \\ 
    \hline\\  [-2ex]
     E  &  E & g _1  & g _2 & g _3 & g _4 & g _5 \\
     g _1 & g _1 &  E & g _4 & g _5 & g _2 & g _3 \\
     g _2 & g _2 & g _5 & E & g _4 & g _3 & g _1  \\
     g _3 & g _3 & g _4 & g _5 & E & g _1 & g _2 \\
     g _4 & g _4 & g _3 & g _1 & g _2 & g _5 & E\\
     g _5 & g _5 & g _2 & g _3 & g _1 & E & g _4  \\ 
\end{array}\] 
where $E$ is the identity. This group is isomorphic to the symmetric group of degree three.  The proper subgroups of $ D _6 $ are $ \{ E \} $ (the trivial group), $ \{ E, g _1 \} $, $ \{ E, g _2 \} $ , $ \{ E , g _3 \} $, and  $\{ E, g _4, g _5 \}$ (the cyclic group of order 3).   
Consider the four body problem with three equal masses, for example let $ m _1 = m _2 = m _3  = 1 $ and $ m _4 = m $, and  consider the action $ \Phi $  of the dihedral group $ D _6 $   on $ \mathbb{R}  ^8 $ defined by 
\begin{align*}
   & \Phi _E =e: (\lambda _0 , \mu , r _{ 12 } , r _{ 13 } , r _{ 14 } , r _{ 23 } , r _{ 24 } , r _{ 34 }) \rightarrow (\lambda _0 , \mu , r _{ 12 } , r _{ 13 } , r _{ 14 } , r _{ 23 } , r _{ 24 } , r _{ 34 }) \\ 
  &  \Phi _{g _1 } = \gamma _1 : (\lambda _0 , \mu , r _{ 12 } , r _{ 13 } , r _{ 14 } , r _{ 23 } , r _{ 24 } , r _{ 34 }) \rightarrow (\lambda _0 , \mu , r _{ 13 } , r _{ 12 } , r _{ 14 } , r _{ 23 } , r _{ 34 } , r _{ 24 }) \\  
&  \Phi _{g _2 } = \gamma _2 : (\lambda _0 , \mu ,r _{ 12 } , r _{ 13 } , r _{ 14 } , r _{ 23 } , r _{ 24 } , r _{ 34 }) \rightarrow (\lambda _0 , \mu ,r _{ 23 } , r _{ 13 } , r _{ 34 } , r _{ 12 } , r _{ 24 } , r _{ 14 }) \\
  &  \Phi _{g _3 } = \gamma _3 : (\lambda _0 , \mu ,r _{ 12 } , r _{ 13 } , r _{ 14 } , r _{ 23 } , r _{ 24 } , r _{ 34 }) \rightarrow (\lambda _0 , \mu , r _{ 12 } , r _{ 23 } , r _{ 24 } , r _{ 13 } , r _{ 14 } , r _{ 34 }) \\
  &  \Phi _{g _4 } = \gamma _4 : (\lambda _0 , \mu , r _{ 12 } , r _{ 13 } , r _{ 14 } , r _{ 23 } , r _{ 24 } , r _{ 34 }) \rightarrow (\lambda _0 , \mu , r _{ 13 } , r _{ 23 } , r _{ 34 } , r _{ 12 } , r _{ 14 } , r _{ 24 }) \\
  &  \Phi _{g _5 } = \gamma _5 : (\lambda _0 , \mu , r _{ 12 } , r _{ 13 } , r _{ 14 } , r _{ 23 } , r _{ 24 } , r _{ 34 }) \rightarrow (\lambda _0 ,\mu , r _{ 23 } , r _{ 12 } , r _{ 24 } , r _{ 13 } , r _{ 34 } , r _{ 14 }) \\
\end{align*} 
Then, for each fixed value of $ m _4 $  we can think of   
the Dziobeck equations as a map $ F: \mathbb{R}  ^8 \to \mathbb{R}  ^8 $.  A computation shows that this map is  equivariant with respect to the action $ \Phi $ for each value of $ m _4 $.

Similarly one can consider the action $ \Phi $  of the dihedral group $ D _6 $   on $ \mathbb{R}  ^6 $ defined by 
\begin{align*}
   & \Phi _E =e: (r _{ 12 } , r _{ 13 } , r _{ 14 } , r _{ 23 } , r _{ 24 } , r _{ 34 }) \rightarrow (r _{ 12 } , r _{ 13 } , r _{ 14 } , r _{ 23 } , r _{ 24 } , r _{ 34 }) \\ 
  &  \Phi _{g _1 } = \gamma _1 : (r _{ 12 } , r _{ 13 } , r _{ 14 } , r _{ 23 } , r _{ 24 } , r _{ 34 }) \rightarrow (r _{ 13 } , r _{ 12 } , r _{ 14 } , r _{ 23 } , r _{ 34 } , r _{ 24 }) \\  
&  \Phi _{g _2 } = \gamma _2 : (r _{ 12 } , r _{ 13 } , r _{ 14 } , r _{ 23 } , r _{ 24 } , r _{ 34 }) \rightarrow (r _{ 23 } , r _{ 13 } , r _{ 34 } , r _{ 12 } , r _{ 24 } , r _{ 14 }) \\
  &  \Phi _{g _3 } = \gamma _3 : (r _{ 12 } , r _{ 13 } , r _{ 14 } , r _{ 23 } , r _{ 24 } , r _{ 34 }) \rightarrow (r _{ 12 } , r _{ 23 } , r _{ 24 } , r _{ 13 } , r _{ 14 } , r _{ 34 }) \\
  &  \Phi _{g _4 } = \gamma _4 : (r _{ 12 } , r _{ 13 } , r _{ 14 } , r _{ 23 } , r _{ 24 } , r _{ 34 }) \rightarrow (r _{ 13 } , r _{ 23 } , r _{ 34 } , r _{ 12 } , r _{ 14 } , r _{ 24 }) \\
  &  \Phi _{g _5 } = \gamma _5 : (r _{ 12 } , r _{ 13 } , r _{ 14 } , r _{ 23 } , r _{ 24 } , r _{ 34 }) \rightarrow (r _{ 23 } , r _{ 12 } , r _{ 24 } , r _{ 13 } , r _{ 34 } , r _{ 14 }) \\
\end{align*} 
Then, for each fixed value of $ m _4 $  we can think of   
the Albouy-Chenciner equations as a map $ f: \mathbb{R}  ^6 \to \mathbb{R}  ^6 $.  A computation shows that this map is  equivariant with respect to the action $ \Phi $ for each value of $ m _4 $.    
\subsection{The global picture}\label{ssn:global_picture_1}
In this subsection we give an overview of the bifurcations we found in the case three of the masses are equal. Many of the remarks are based on numerical computations. 
The central configurations of four bodies with four equal masses are well understood since the work of Alain Albouy \cite{albouy_symetrie_1995,albouy_symmetric_1996}.  Hence, one can compute numerical approximations of the central configurations and then use continuation methods to find the central configurations for as the mass parameter varies.   We computed the solutions  of the  central configuration equations using homotopy continuation methods  for the equal mass case  using HOM4PS2  \cite{lee_hom4ps-2.0:_2008} for the Dziobeck equations, and  HOM4body (an offshoot of HOM4PS2 ) for the Albouy-Chenciner equations . We then numerically studied the solutions of the  equations as we varied the parameter  $ m=m_4 $.  We found that the Jacobian determinant of the  equations vanishes along certain solutions at $ m =m _\ast = (81 + 64 \sqrt{3}) / 249  \approx 0.77048695$, $ m =m _{ \ast \ast }  \approx 0.99184227$ and at $ m = m ^\ast \approx 1.00266054$. Each of these values of $m$  actually correspond to  a bifurcation. 

At $ m = m ^\ast $ there are three fold bifurcations (or turning points). In this case, as $m $ is increased through $ m ^\ast $ six solutions coalesce to three. These solutions are illustrated in Figure  \ref{fig:third_bifurcation}.

At $ m = m _{ \ast \ast } $ there  are three supercritical pitchfork bifurcation, so that, when decreasing $m$,  nine solutions coalesce to three. Some of these solutions are illustrated in Figure \ref{fig:first_bifurcation}, where we show one of the pitchfork bifurcations.  The other two cases are similar, except that the solutions to be considered in the two remaining cases  have $ m _1 $ in the convex hull formed by the other three masses, in one case, and $ m _2 $ in the other.


At $ m=m _{ \ast} =(81 + 64 \sqrt{3}) / 249 \approx 0.77048695$ four solutions coalesce into one, and then, as $ m  $ decreases the one solution branches into four solutions again (see Figure \ref{fig:second_bifurcation} ). This value of $ m$  can easily be found analytically by studying the equilateral triangle family $ r _{ 12 } = r _{ 13 } = r _{ 23 } = 1  $, $ r _{ 14 } = r _{ 24 } = r _{ 34 } = \frac{ \sqrt{3}}{3} $. With the aid of a computer algebra system one can show that the  value of the Jacobian determinant of the Dziobeck equations (with the normalizing condition $ I = 1 $ ) along the equilateral family is 

\[
- \frac{ 64(60 \sqrt{3} - 133)(- 249 m + 64 \sqrt{3} + 81)^2m^2(m + 3)^5} { 20667 } 
\]
which is non-zero for all positive values of $m$ except for $ m  = m _{\ast\ast}  =(81 + 64 \sqrt{3}) / 249 \approx 0.77048695 $.
  The  value of $ m _{\ast\ast} $ was originally found  analytically by Palmore  \cite{palmore_relative_1973} and an analytical study of the bifurcations at this point was done in \cite{meyer_bifurcations_1988}. 

 The number of solutions to the Dziobeck and Albouy-Chenciner equations together with the number of geometrically distinct planar central configurations implied by our numerical computations is summarized in the following table

 \begin{center}
    \begin{tabular}{| l | l | l | l |}
    \hline
     Value of $ m _4 $ &  \# of solns & \# of solns & \# of geometrically  \\ 
     &  of Dziobek eqns&  of AC eqns & different c.c's  \\ \hline\hline
    $ (0,m _\ast )$ & 13 & 26 & 25\\ \hline
    $ m_\ast$  & 10 & 23 & 22\\ \hline
    $(m_\ast, m _{ \ast \ast }]  $ & 13 & 26 & 25\\ \hline
    $ (m_{\ast\ast}, m ^\ast)  $  & 19 & 32 & 31\\ \hline 
    $ m ^\ast $  & 16 & 29 & 28\\ \hline
    $ (m ^\ast, \infty)  $  & 13 & 26 & 25\\ \hline 
    \end{tabular}
\end{center}

\subsection{Bifurcation at \texorpdfstring{$ m=m ^\ast \approx  1.00266054$}{}}

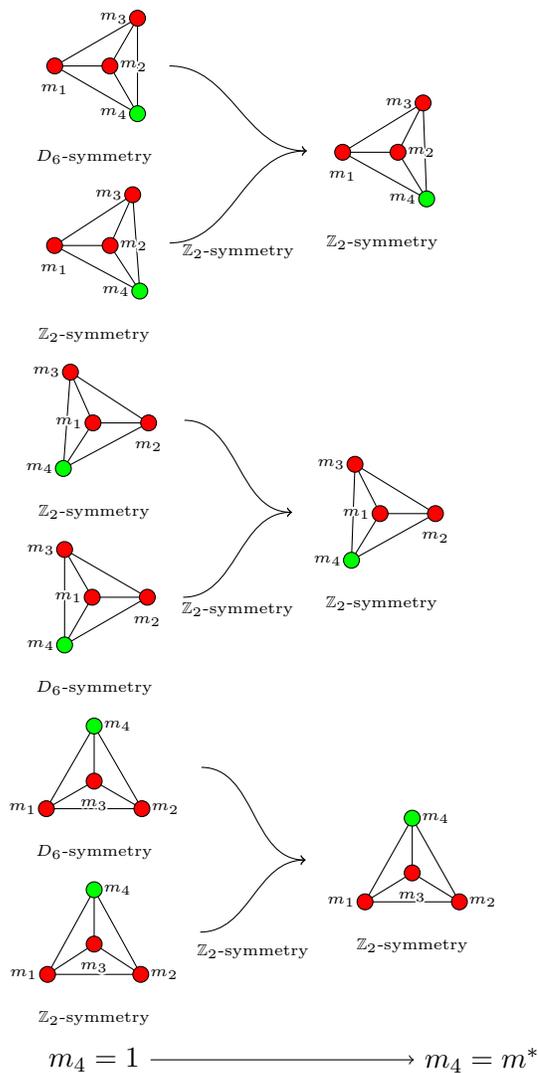
\begin{figure}  
\begin{center}
\begin{tikzpicture}[remember picture,inner/.style={anchor=center},outer/.style={},scale=1, every node/.style={scale=1}]
    \node[outer,label= below:{\tiny $ D _6 $-symmetry }] (1) {
        \massConfig{red}{red}{red}{green}
                   {below}{right}{left}{left}
                   {0}{0}
                   {0.733313000286325}{0}
                   {1.09996950042949}{0.635067687173342}
                   {1.09996950042949}{-0.635067687173343}
    };
    \node[outer,below=0.4 of 1,label=below:{\tiny $\mathbb{Z}_2$-symmetry}] (2) {
        \massConfig{red}{red}{red}{green}
                   {below}{right}{left}{left}
                   {0}{0}
                   {0.739038147757465}{0.0}
                   {1.03871007800198}{0.675554674370853}
                   {1.13317732277927}{-0.606015099535512}
    };

    \node[outer,below=0.4 of 2,label=below:{\tiny $\mathbb{Z}_2$-symmetry}] (3) {
        \massConfig{red}{red}{red}{green}
                   {left}{below}{left}{left}
                   {0}{0}
                   {0.739038147757465}{0}
                   {-0.299671930244515}{0.675554674370855}
                   {-0.394139175021803}{-0.606015099535511}
    };
      \node[outer,below=0.4 of 3,label=below:{\tiny $D _6 $-symmetry }] (4) {
        \massConfig{red}{red}{red}{green}
                   {left}{below}{left}{left}
                   {0}{0}
                   {0.733313000286325}{0}
                   {-0.366656500143161}{0.635067687173343}
                   {-0.366656500143161}{-0.635067687173342}
    };

    \node[outer,below=0.4 of 4 ,label=below:{\tiny $D_6$-symmetry}] (5) {
        \massConfig{red}{red}{red}{green}
                   {left}{right}{below}{right}
                   {0}{0}
                   {1.27013537434668}{0}
                   {0.635067687173341}{0.366656500143163}
                   {0.635067687173345}{1.09996950042949}
    };

      \node[outer,below=0.4 of 5,label=below:{\tiny $\mathbb{Z}_2$-symmetry}] (6) {
        \massConfig{red}{red}{red}{green}
                   {left}{right}{below}{right}
                   {0}{0}
                   {1.23906930565130}{0}
                   {0.619534652825650}{0.402932001445636}
                   {0.619534652825651}{1.12584277751133}
    };
    
      \node[outer, above right =2.3of 3,label=below:{\tiny $\mathbb{Z}_2$-symmetry}] (7) {
        \massConfig{red}{red}{red}{green}
                   {below}{right}{left}{left}
                   {0}{0}
                   {0.735928393771000}{0}
                   {1.06938165100555}{0.656048417423534}
                   {1.11666101052609}{-0.620607295569213}
    };

      \node[outer,below right=2.3 of 2,label=below:{\tiny $\mathbb{Z}_2$-symmetry}] (8) {
        \massConfig{red}{red}{red}{green}
                   {left}{below}{left}{left}
                   {0}{0}
                   {0.735928863812000}{0}
                   {-0.333447752021637}{0.656051742837048}
                   {-0.380734870169108}{-0.620604912453955}
    };

      \node[outer,below right =2.3 of 4,label=below:{\tiny $\mathbb{Z}_2$-symmetry}] (9) {
        \massConfig{red}{red}{red}{green}
                   {left}{right}{below}{right}
                   {0}{0}
                   {1.25458225983000}{0}
                   {0.627291129916760}{0.384833000836338}
                   {0.627291129904817}{1.11292006160845}
    };

    \path (1) edge [->,out=0,in=180] node {} (7);
    \path (2) edge [->,out=0,in=180] node[label={[label distance=14]below:\tiny{$\mathbb{Z}_2 $}-symmetry}] {} (7);
    \path (3) edge [->,out=0,in=180] node {} (8);
    \path (4) edge [->,out=0,in=180] node[label={[label distance=14]below:\tiny{$\mathbb{Z}_2 $}-symmetry}] {} (8);
    \path (5) edge [->,out=0,in=180] node {} (9);
    \path (6) edge [->,out=0,in=180] node[label={[label distance=14]below:\tiny{$\mathbb{Z}_2 $}-symmetry}] {} (9);
    \node[outer,below =0.5 of 6] (10) {{\small $ m_4=1 $} };
    \node[outer,right=3.5 of 10] (11) {{\small$ m_4=m ^\ast  $ }};
    \path (10) edge [->,out=0,in=180] node[above] {} (11);%
\end{tikzpicture}
\end{center}
\caption{On the left we show three pairs of solutions for $m _1 = m _2 = m _3 = m _4 = 1$. These solutions are continued, by increasing the parameter $ m _4=m $. Then, at $ m =m ^\ast   \approx 1.00266054...$, each pair of solution  coalesce into one solution with a $  \mathbb{Z}  _2 $ symmetry. This solution cannot be continued further, since we encounter a fold bifurcation.  \label{fig:third_bifurcation}} 
\end{figure}

We now use interval arithmetic to analyze one of the folds bifurcations at $ m = m ^\ast $ (the one with $ m _3 $ in the convex hull formed by the other masses). This approach will allow us to prove that the bifurcation exists and it is a fold. 
Let $ \tilde F = [(F _1 , \ldots , F _8, \det(DF)] $  be the vector having as components the Dziobeck equations and the determinant of the Jacobian matrix of $F$. Then we can use  the Krawczyk operator to prove the existence of a (unique) solution $(\mathbf{x}  ^\ast , m ^\ast )$ to the equation $  \tilde F (\mathbf{x} , m ) = 0 $ in a  small box. Let  $ [\mathbf{x} ^\ast]\times [m ^\ast ]$ be the box containing the solution  $(\mathbf{x}  ^\ast , m ^\ast )$. Using as initial guess a value obtain using numerical computations we obtain that
\[[\mathbf{x} ^\ast]=\begin{bmatrix}
    4.10486749931246396567394557?\\
    0.7904883951465367?\\
    0.98742601345653?\\
    0.57921860462471?\\
    1.00549177029900?\\
    0.57921860462471?\\
    1.00549177029900?\\
    0.57304559793134?\\
\end{bmatrix}\]
and $ [m ^\ast ]= 1.00266054757261000068580350?$.
Suppose $ A = DF ([\mathbf{x}  ^\ast ],[ m ^\ast]) $. Computing the echelon form of $A$ using Gauss elimination it is possible to show rigourosly that the null-space  of $A$ is one dimensional, since we know that at least one eigenvalue must be zero, but seven of the eight rows of the echelon form are clearly non-zero.   From the echelon form of $A$ we  find that the eigenvectors of $ A $ and $ A ^T $ corresponding to the zero eigenvalue are  
\[\mathbf{v} = 
\left[\begin{array}{r}
0.? \times 10^{-9} \\
-0.179026448? \\
2.989514215? \\
-0.5496816801? \\
-1.4331568126? \\
-0.5496816801? \\
-1.4331568126? \\
1
\end{array}\right], \mbox{ and}\quad \mathbf{w} =
\left[\begin{array}{r}
0.? \times 10^{-9} \\
-0.235312131? \\
1.0068617795? \\
-0.5380276784? \\
-0.46549501352? \\
-0.5380276784? \\
-0.46549501352? \\
1
\end{array}\right],
\]
respectively.
Moreover we have that 
\begin{align*} 
    & \mathbf{w} ^T F _m ([\mathbf{x}  ^\ast ],[m ^\ast ]) = -6.501134640? \\
    & \mathbf{w} ^T [D^2F ([\mathbf{x}  ^\ast ],[m ^\ast ])(\mathbf{v} ,\mathbf{v} ) ]= -2066.64414?
\end{align*}  
and thus, since the interval obtained do not contain zero,  by Theorem \ref{thm:sotomayor}, the bifurcation occurring at   $(\mathbf{x}  ^\ast , m ^\ast )$ is a fold bifurcation.

This bifurcation can also be studied by imposing the symmetry on the equations. This approach was taken in \cite{bernat_planar_2009}. Note the bifurcation value we obtain  differs slightly from the one obtained in \cite{bernat_planar_2009}. We are confident that our value for $ m ^\ast $ is the correct one since we verified it using several different methods (including using the equations used in  \cite{bernat_planar_2009}).

\subsection{Bifurcation at \texorpdfstring{$ m = m _{ \ast \ast }\approx   0.99184227$}{} \label{sec:bifurcation_at_0.99184227}}

\begin{figure}
\begin{center}
\begin{tikzpicture}[remember picture,inner/.style={anchor=center},outer/.style={},scale=1, every node/.style={scale=1}]
    \node[outer,label=below:{\tiny $\mathbb{Z}_2$-symmetry}] (A1) {
        \massConfig{red}{red}{red}{green}
                   {left}{right}{right}{right}
                   {0}{0}
                   {1.28504675000000}{0}
                   {0.651697123470713}{0.348522949625159}
                   {0.597368443982243}{1.08556146630125}
    };
    
    \node[outer,below=of A1,label=below:{\tiny $D _6$-symmetry}] (A2) {
        \massConfig{red}{red}{red}{green}
                   {left}{right}{right}{right}
                   {0}{0}
                   {1.27013537000000}{0.0}
                   {0.635067685000000}{0.366656503334852}
                   {0.635067685000000}{1.09996949666515}
    };

    \node[outer,below=of A2,label=below:{\tiny $\mathbb{Z}_2$-symmetry}] (A3) {
        \massConfig{red}{red}{red}{green}
                   {left}{right}{right}{right}
                   {0}{0}
                   {1.28504675000000}{0}
                   {0.633349626529287}{0.348522949625159}
                   {0.687678306017757}{1.08556146630125}
    };

    \node[outer,right =0.5 of A1, label=below:{{\tiny no symmetry}}] (A1bis) {
        \massConfig{red}{red}{red}{green}
                   {left}{right}{right}{right}
                   {0}{0}
                   {1.28097976000000}{0}
                   {0.648499054358301}{0.348229018798578}
                   {0.614609515763249}{1.08628051547614}
    };
    \node[outer,right=0.5 of A2,label=below:{\tiny $\mathbb{Z}_2 \cong  \{E, g _3  \}$}] (A2bis) {
        \massConfig{red}{red}{red}{green}
                   {left}{right}{right}{right}
                   {0}{0}
                   {1.28097976000000}{0.0}
                   {0.640489880000000}{0.353912214766993}
                   {0.640489880000000}{1.09090737569573}
    };

    \node[outer,right=0.5 of A3,label=below:{\tiny no symmetry}] (A3bis) {
        \massConfig{red}{red}{red}{green}
                   {left}{right}{right}{right}
                   {0}{0}
                   {1.28559920000000}{0}
                   {0.637100145641699}{0.348229018798577}
                   {0.670989684236751}{1.08628051547614}
    };

(0.000000000000000, 0.000000000000000, 1.28559920000000,
0.000000000000000, 0.637100145641699, 0.348229018798577,
0.670989684236751, 1.08628051547614)
     
    \node[outer,right =1 of A2bis,label=below:{\tiny $\mathbb{Z}_2 \cong  \{E, g _3  \}$}] (A4) {
        \massConfig{red}{red}{red}{green}
                   {left}{right}{right}{right}
                   {0}{0}
                   {1.22925116496338}{0}
                   {0.642974811453094}{0.348041582901409}
                   {0.642974816260335}{1.08673830646806}
    };
    \node[outer,right =0.5 of A4,label=below:{\tiny $\mathbb{Z}_2 \cong  \{E, g _3  \}$}] (A5) {
        \massConfig{red}{red}{red}{green}
                   {left}{right}{below}{right}
                   {0}{0}
                   {1.49354191000000}{0}
                   {0.746770955000000}{0.000950575205582646}
                   {0.746770955000000}{0.851635371584066}
    };
    \path (A1) edge [->,out=0,in=180] node[above] {} (A1bis);
    \path (A2) edge [->,out=0,in=180] node[above] {} (A2bis);
    \path (A3) edge [->,out=0,in=180] node[below] {} (A3bis);

    \path (A1bis) edge [->,out=0,in=180] node[above] {} (A4);
    \path (A2bis) edge [->,out=0,in=180] node[above] {} (A4);
    \path (A3bis) edge [->,out=0,in=180] node[below] {} (A4);
    \path (A4) edge [->,out=0,in=180] node[below] {} (A5);

    \node[outer,below =  of A3] (1) {{\small $ m_4=1 $} };
    \node[outer,right=2 of 1] (2) {{\small$ m_4=0.995  $ }};
    \node[outer,right=2 of 2] (3) {{\small $ m_4=m ^\ast  $} };
    \node[outer,right=1.5 of 3] (4) {{\small $ m_4=0.005 $} };
    \path (1) edge [->,out=0,in=180] node[above] {} (2);%
    \path (2) edge [->,out=0,in=180] node[above] {} (3);%
    \path (3) edge [->,out=0,in=180] node[above] {} (4);%

\end{tikzpicture}
\end{center}
\caption{On the left we show three solutions for $m _1 = m _2 = m _3 = m _4 = 1$. These solutions are continued, by varying the parameter $ m =m_4 $. As soon as $ m  <1 $ the solutions loose symmetry. Then, at $ m =m _\ast   \approx 0.99184227\ldots$,   they coalesce into one solution with a $  \mathbb{Z}  _2 $ symmetry. This solution can be continued further.  On the right we show the corresponding solutions for $ m _1 = m _2 = m _3 = 1 $ and $ m  = 0.005 $.\label{fig:first_bifurcation}  } 
\end{figure}
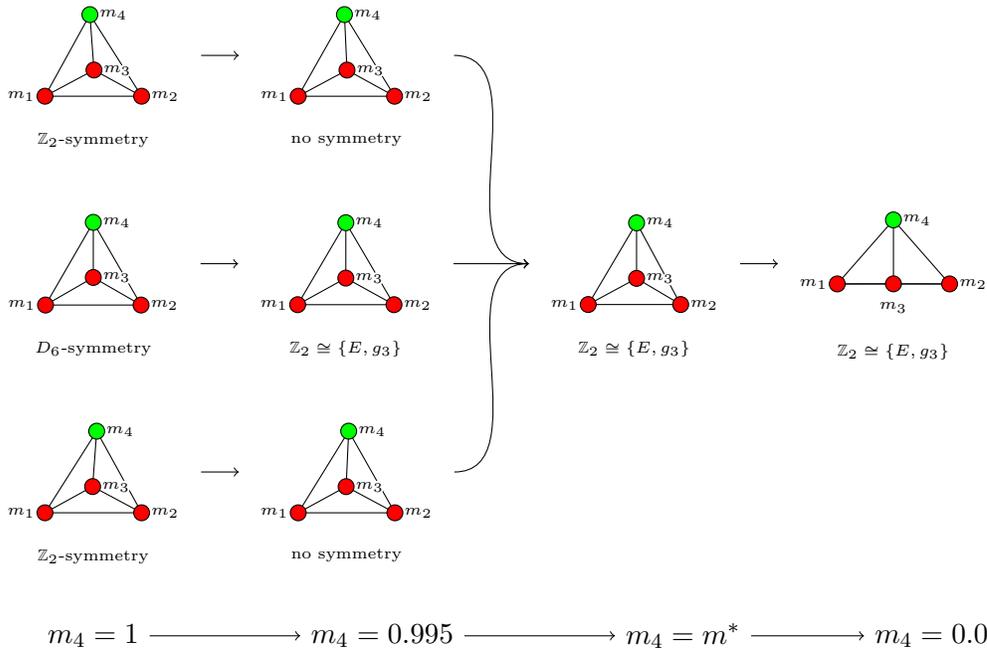

We now use interval arithmetic to analyze one of the pitchfork bifurcations  at $ m = m _{ \ast \ast} $ (the one in which $ m _3 $ is in the convex hull formed by the other masses).
Let $ \tilde F = [(F _1 , \ldots , F _8, \det(DF)] $  be the vector having as components the Dziobeck equations and the determinant of the Jacobian matrix of $F$. Then we can use  the Krawczyk operator to prove the existence of a (unique) solution $(\mathbf{x}  _{\ast\ast} , m _{\ast\ast} )$ to the equation $  \tilde F (\mathbf{x} , m ) = 0 $ in a  small box. Let  $ [\mathbf{x} _{\ast\ast}]\times [m _{\ast\ast} ]$ be the box containing the solution  $(\mathbf{x}  _{\ast\ast} , m _{\ast\ast} )$. Using as initial guess a value obtain using numerical computations we obtain that
\[[\mathbf{x} _{\ast\ast}]=\begin{bmatrix}
    4.07733304636361696432719?\\
    0.777155400247894593452215?\\
    1.013474951606110121651278?\\
    0.57621299527180?\\
    0.995153301920946?\\
    0.57621299527180?\\
    0.995153301920946?\\
    0.582177257875351248071238?\\
\end{bmatrix}\]
and $ [m _{\ast\ast} ]= 0.99184227439094091554349?$.
Suppose $ A = DF ([\mathbf{x}  _{\ast\ast} ],[ m _{\ast\ast}]) $. Computing the echelon form of $A$ using Gauss elimination it is possible to show rigorously that the null-space  of $A$ is one dimensional, since we know that at least one eigenvalue must be zero, but seven of the eight rows of the echelon form are clearly non-zero.   From the echelon form of $A$ we  find that the eigenvectors of $ A $ and $ A ^T $ corresponding to the zero eigenvalue are  
\[\mathbf{v} = \left[\begin{array}{r}
0.? \times 10^{-10} \\
0.? \times 10^{-11} \\
0.? \times 10^{-11} \\
0.34810374597? \\
-1.000000000000? \\
-0.348103745971? \\
1 \\
0
\end{array}\right]
, \mbox{ and}\quad \mathbf{w} =
\left[\begin{array}{r}
0.? \times 10^{-10} \\
0.? \times 10^{-11} \\
0.? \times 10^{-11} \\
1.03829933568? \\
-1.000000000000? \\
-1.038299335680? \\
1 \\
0
\end{array}\right],
\]
respectively.
Moreover we have that 
\begin{align*} 
    & \mathbf{w} ^T F _m ([\mathbf{x}  _{\ast \ast}],[m _{\ast \ast}]) = 0.? \times 10^{-10} \\
    & \mathbf{w} ^T [DF_m ([\mathbf{x}  _{\ast \ast}],[m _{\ast\ast} ])\mathbf{v}  ]= 34.944523147?\\
    & \mathbf{w} ^T [D^2F ([\mathbf{x}  _{\ast \ast}],[m _{\ast\ast} ])(\mathbf{v} ,\mathbf{v} ) ]= 0.?\times 10^{-7}\\
    & \mathbf{w} ^T [D^3F ([\mathbf{x}  _{\ast \ast}],[m _{\ast\ast} ])(\mathbf{v} ,\mathbf{v},\mathbf{v} ) ]= -2636.629585?\\
\end{align*}  
and thus,   by Theorem \ref{thm:sotomayor}, this suggests that the  bifurcation occurring at   $(\mathbf{x}  _{\ast\ast} , m _{\ast\ast} )$ is a pitchfork bifurcation. Since the last expression above is negative, again by Theorem \ref{thm:sotomayor}, the branches occur for $ m> m _{ \ast\ast} $ and the bifurcation is supercritical. To prove  rigorously that the bifurcation we found is  indeed a pitchfork bifurcation we use Lemma \ref{lem:Z2_symmetry}. In this case $ R=\Phi _{ g _3 }$. The fact that $ R  \mathbf{x}  ^\ast= R \mathbf{x}  ^\ast $ follow from the  symmetry of the solution (the symmetry of the solution can be shown rigorously by applying the Krawczyk operator to $ \tilde F $ with the constraints imposed by the symmetry). Also, from Lemma \ref{lem:sym_alternative} we have either $ R \mathbf{v} =  \mathbf{v} $ or  $ R \mathbf{v} = - \mathbf{v} $. Inspecting the interval expression we obtained for $\mathbf{v}$ it is clear that the first alternative cannot hold, hence $ R \mathbf{v} = - \mathbf{v} $. Thus, the hypothesis of \ref{lem:Z2_symmetry} are verified and  the bifurcation is a pitchfork.

\subsection{Bifurcation at \texorpdfstring{$ m = m _{ \ast }=(81 + 64 \sqrt{3}) / 249 \approx   0.77048695$}{}  }
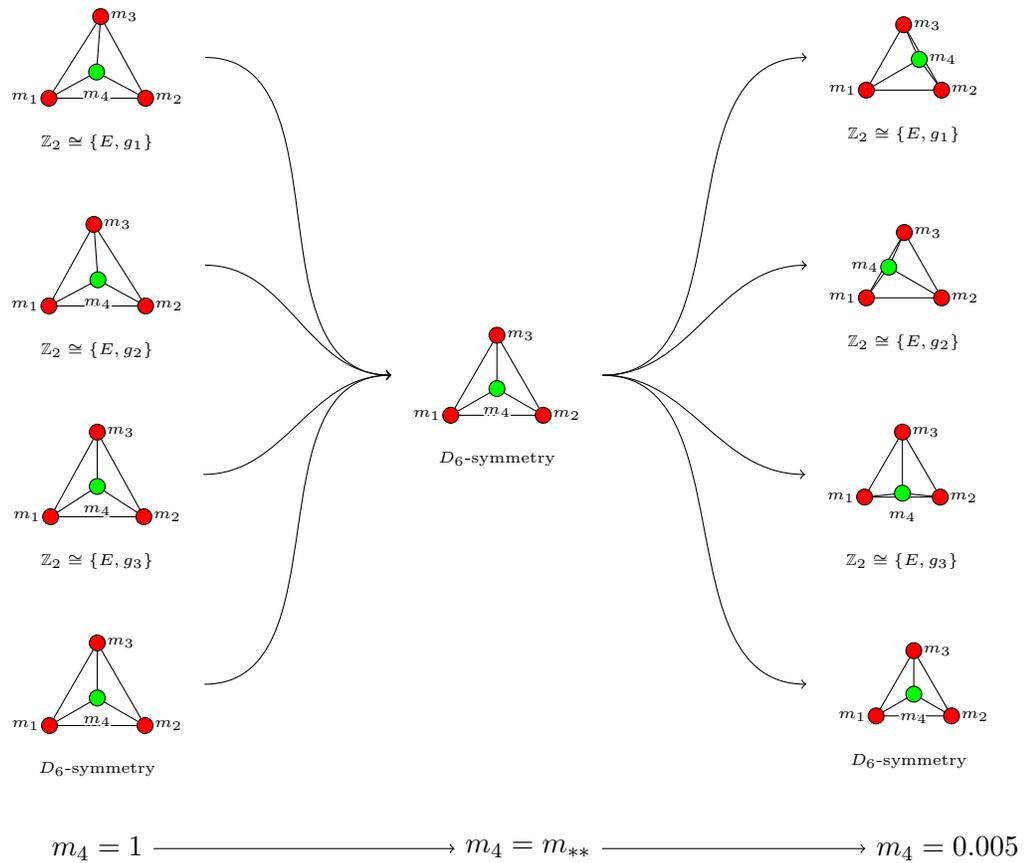
\begin{figure}
\begin{center}
\begin{tikzpicture}[remember picture,inner/.style={anchor=center},outer/.style={},scale=1, every node/.style={scale=1}]
     \node[outer,label=below:{\tiny $\mathbb{Z}_2\cong \{E, g_1 \} $}] (1) {
        \massConfig{red}{red}{red}{green}
                   {left}{right}{right}{below}
                   {0}{0}
                   {1.28504675000000}{0}
                   {0.687678306017757}{1.08556146630125}
                   {0.633349626529287}{0.348522949625159}
     };
     \node[outer,below=of 1,label=below:{\tiny $\mathbb{Z}_2 \cong \{E, g _2 \} $}] (2) {
       \massConfig{red}{red}{red}{green}
                  {left}{right}{right}{below}
                  {0}{0}
                  {1.28504675000000}{0.0}
                  {0.597368443982243}{1.08556146630125}
                  {0.651697123470713}{0.348522949625159}
    };
    
    \node[outer,below=of 2,label=below:{\tiny $\mathbb{Z}_2\cong \{E, g _3 \} $}] (3) {
        \massConfig{red}{red}{red}{green}
                   {left}{right}{right}{below}
                   {0}{0}
                   {1.23906931000000}{0}
                   {0.619534655000000}{1.12584277807321}
                   {0.619534655000000}{0.402932002215576}
    };
    \node[outer,below=of 3, label=below:{\tiny $D_6$-symmetry}] (4) {
        \massConfig{red}{red}{red}{green}
                   {left}{right}{right}{below}
                   {0}{0}
                   {1.27013537000000}{0}
                   {0.635067685000000}{1.09996949666515}
                   {0.635067685000000}{0.366656503334852}
    };
    \node[outer,below right =3.5 of 1,label=below:{\tiny $D_6$-symmetry}] (5) {
        \massConfig{red}{red}{red}{green}
                   {left}{right}{right}{below}
                   {0}{0}
                   {1.22925116496338}{0}
                   {0.614625582355866}{1.06456273654612}
                   {0.614625582513117}{0.354854245548158}
    };

    \path (1) edge [->,out=0,in=180] node[above] {} (5);
    \path (2) edge [->,out=0,in=180] node[above] {} (5);
    \path (3) edge [->,out=0,in=180] node[below] {} (5);
    \path (4) edge [->,out=0,in=180] node[below] {} (5);

    \node[outer,right=8 of 1,label=below:{\tiny $\mathbb{Z}_2\cong \{E, g _1 \} $}] (6) {
        \massConfig{red}{red}{red}{green}
                   {left}{right}{right}{right}
                   {0}{0}
                   {1.00067216000000}{0.0}
                   {0.495651204385998}{0.869295493713027}
                   {0.703521547759584}{0.408713868967477}
    };
  
    \node[outer,right=8 of 2,label=below:{\tiny $\mathbb{Z}_2 \cong \{E, g _2 \}$}] (7) {
        \massConfig{red}{red}{red}{green}
                   {left}{right}{right}{left}
                   {0}{0}
                   {1.00067216000000}{0}
                   {0.505020955614002}{0.869295493713027}
                   {0.297150612240416}{0.408713868967477}
    };

    \node[outer,right= 8 of 3,label=below:{\tiny $\mathbb{Z}_2 \cong \{E, g _3 \}$}] (8) {
      \massConfig{red}{red}{red}{green}
                 {left}{right}{right}{below}
                 {0}{0}
                 {1.00534612000000}{0}
                 {0.502673060000000}{0.865254047404172}
                 {0.502673060000000}{0.0516266190201723}
    };
  
    \node[outer,right =8 of 4,label=below:{\tiny $D_6$-symmetry}] (9) {
      
        \massConfig{red}{red}{red}{green}
                   {left}{right}{right}{below}
                   {0}{0}
                   {1.00232192000000}{0}
                   {0.501160960000000}{0.868036245489994}
                   {0.501160960000000}{0.289345414510006}
    };
    
    \path (6) edge [<-,out=180,in=0] node[above] {} (5);
    \path (7) edge [<-,out=180,in=0] node[above] {} (5);
    \path (8) edge [<-,out=180,in=0] node[below] {} (5);
    \path (9) edge [<-,out=180,in=0] node[below] {} (5);
    \node[outer,below = of 4] (10) {{\small $ m_4=1 $} };
    \node[outer,right=4 of 10] (11) {{\small$ m_4=m _{\ast\ast} $ }};
    \node[outer,right=3.5 of 11] (12) {{\small $ m_4=0.005 $} };
    \path (10) edge [->,out=0,in=180] node[above] {} (11);%
    \path (11) edge [->,out=0,in=180] node[above] {} (12);%

\end{tikzpicture}
\end{center}
\caption{On the left we show four solutions of the AC equations for $m _1 = m _2 = m _3 = m _4 = 1$. We vary the parameter $ m =m_4 $ and continue these solutions up to $ m  =m_{\ast \ast}  \approx 0.77048695\ldots$, where they coalesce into one solution. This solution branches again into four solutions. These solutions can be continued further. On the right we show the corresponding solutions for $ m _1 = m _2 = m _3 = 1 $ and $ m  = 0.005 $.\label{fig:second_bifurcation} }
\end{figure}
The bifurcations  at $ m _{\ast \ast} $ is not covered by the theory of section \ref{section:bifurcations} because the null-space is two dimensional. This bifurcation was studied in detail in \cite{meyer_bifurcations_1988}
using the Dziobeck equations  and the Liapunov-Schmidt reduction. For the sake of completeness we reproduce those results, but, to differentiate our computations from the ones in \cite{meyer_bifurcations_1988},  we use the Albouy-Chenciner equations instead of the Dziobeck equations. 
  

  Let us denote the Albouy-Chenciner equations for the four-body problem (with normalization $ \lambda ' = - U/(MI) = - 1 $)  as $ f = 0 $ where  $ f = (f _1 , f _2 , f _3 , f _4 , f _5) $, and order the $6 $ variables by introducing the $6 $-vector $ z = (z _1 , z _2 , z _3, z _4 , z _5 , z _6)= (r _{ 12 } , r _{ 13 } , r _{ 14 } , r _{ 23 } , r _{ 24 } , r _{ 34 })$.  
  The equilateral triangle family corresponds to  the solution $ z = a $ where  $ a  = ( \alpha , \alpha ,\frac{  \sqrt{3}}{3} \alpha , \alpha ,\frac{ \sqrt { 3 }}{3} \alpha , \frac{ \sqrt { 3 } }{3} \alpha ) $ with 
   \[
      \alpha = \left( \frac{ 3(\sqrt{3}m + 3)  }{(m + 3)} \right) ^{1/3}.
  \]

    When $m = m _{\ast \ast} $ the $ 4 \times 4 $ submatrix obtained from the Jacobian of the Albouy-Chenciner equations by deleting the last two rows and columns has non-zero determinant. 

  Let  $ L_0 = D f _{ m _{ \ast \ast }} (a) $ be the linearization of $f$ at $ a $. 
  Let $ Q $ be the projection onto the subspace  $  \operatorname{span } ( \{e _5 , e _6  \}) $  of $ \mathbb{R}  ^6 $, where $ e _1 , \ldots , e _6 $ are the elements of the standard basis of $ \mathbb{R}  ^6 $. 
  Let $P$ be the projection onto the subspace $\operatorname{span } (    \{e _5 , e _6  \})$. 
  Let $ z =  Pz + (Id - P) z = u + v $, so that $ u = (0,0,0,0, z _5, z _6) $ and $ v = ( z _1 , z _2 , z _3 ,z _4 , 0 ,0) $. The original equation $ f_m ( z) = 0 $ can now be split into the two equations:
  \[
      \tilde f _m (u,v) =  (Id-Q)f_m (u + v) = 0, \mbox{ and } Q f_m ( u + v) = 0.   
  \] 
  By the implicit function theorem, since the Jacobian  of $  \tilde f _m  $ with respect to $v$ is non-singular, the first of the equations above has the unique solution $ v = v ^\ast_m  ( u) $ for $ m $ near $ m _{\ast \ast}$. This solution can be substituted in the second equation and yields the so-called bifurcation equation 
  \[
  G_m( u) = Q f_m ( u + v ^\ast_m  (u)) = 0.
  \] 
  In our case we use an approximation of $ v ^\ast_m ( u)  $ and $ G_m (u) $ by Taylor expansion. More precisely, let   $ m = m _{\ast \ast} + \epsilon $ , $ z = a + \epsilon b + \epsilon ^2 c + \ldots $, where $ a = ( \alpha , \alpha ,\frac{  \sqrt{3}}{3} \alpha , \alpha ,\frac{ \sqrt { 3 }}{3} \alpha , \frac{ \sqrt { 3 } }{3} \alpha ) $,  $ b = (b _1 , b _2 , b _3 , b _4 , b _5 , b _6) $ , $ c = (c _1 , c _2 ,c _3 ,c _4 ,c _5 , c _6) $, and $ \alpha $ is as above.  We solve    $ (Id-Q)f_m (u + v) = 0$  order by order and we substitute
  into the bifurcation equation $ G _{ m} (u) = 0 $. This allows us to find $ b _1 , b _2 , b _3 , b _4 , c _1 , c _2 , c _3 , c _4 $ as functions of $ b _5, b _6 , c _5 , c _6 $ . In particular we have
  \begin{equation}\label{eqn:bees}  
  \begin{split} 
  b _1 & = \frac{ 1 } { 83 } (81 + 64 \sqrt{3 }) b _6 \\
  b _2 & = \frac{ 1 } { 83 } (81 + 64 \sqrt{3 }) b _5 \\
  b _3 & = - b _6 - b _5 \\
  b _4 & = - \frac{ 1 } { 83 } (81 + 64 \sqrt{3 }) (b _5 + b _6)   
  \end{split}
  \end{equation} 
 we omit the expressions for the $ c _i $s since they are quite long. 
The  equation  $ G _{ m} (u) = 0 $ is identically zero at order $0$ and $1$, while at order 2 becomes:
  \begin{align*} 
      (b_6 + 2b_5)(p_1 b _6  +  p _2)= 0\\
      (b_5+2b_6)(p_1 b _5  + p _2  )= 0
  \end{align*} 
 where
 \begin{align*} 
 p _1 & = 529935346928\\
 p _2 & = 2 ^{ 1/3 } (49+9 \sqrt{3})^{1/3}(207 + 16 \sqrt{3})^{2/3}(362080075 \sqrt{3}-711993501) \\
      & \approx -1.63211356 \times 10 ^{10}
 \end{align*} 
These equations have the following four solutions:
\begin{align*} 
b _5 & = b _6= 0 \\
b _5 & =b _6 = - p _3 \\
b _5 & = -p _3, \quad b _6 = 2 p _3 \\
b _5 & = 2 p _3, \quad b _6 = - p _3 \\
\end{align*} 
where $ p _3 = p _1 / p _2 \approx - 32.46926929$. From this we find  four approximate solutions of the form
{\footnotesize
    \[
    \left(  \alpha + \epsilon b _1 + \ldots  , \alpha + \epsilon b _2 + \ldots ,\frac{ \sqrt{3}}{3}\alpha + \epsilon b _3 + \ldots   , \alpha +  \epsilon b _4 + \ldots  , \frac{ \sqrt{3}}{3}\alpha + \epsilon b _5 + \ldots  , \frac{ \sqrt{3}}{3}\alpha + \epsilon b _6 + \ldots  \right) 
\]
}
where $b _1 , \ldots , b _4 $ can be computed from equations \eqref{eqn:bees}.
These results seem to be compatible  with the results obtained in \cite{meyer_bifurcations_1988} for the Dziobeck equations.
Note that this analysis is local in nature, while our numerical results show that the branches of the bifurcation can be  continued further see figure \ref{fig:second_bifurcation}. The symmetry of the various branches is easy to detect numerically and is indicated in figure  \ref{fig:second_bifurcation}. The symmetry of the solutions can also be inferred theoretically from the symmetry of the equations, see for example the argument in \cite{meyer_bifurcations_1988}.

\section{The Case of Two Pairs of Equal Masses}
\label{sn:2eqmasses}
\subsection{Equivariance}
Recall that the Klein four-group is the group $ \mathbb{Z}_2   \times \mathbb{Z}_2$, the direct product of two copies of the cyclic group of order 2. This group has four elements $   \mathbb{Z}_2   \times \mathbb{Z}_2 = \{ E , h _1 , h _2 , h _3 \} $ with $ h _3 = h _1 h _2 $ and Cayley table 

\[\begin{array}{c|cccc} 
     \circ  &  E   &   h _1  &  h _2  &  h _3  \\ 
    \hline\\  [-2ex]
     E    &  E   & h _1  & h _2  & h _3 \\
     h _1 & h _1 & E     & h _3  & h _2  \\
     h _2 & h _2 & h _3  & E     & h _1  \\
     h _3 & h _3 & h _2  & h _1  & E \\
 \end{array}\] 
where $ E $ is the identity. The proper subgroups of the Klein four-group are $ \{ E \} $ (the trivial group), $ \{ E, h _1 \} $, $ \{ E, h _2 \} $. 
Consider the four body with masses $ m _1 = m _2 = 1 $ and $ m _3 = m _4 = m $, and consider the action $ \Psi $ of the Klein four-group on $ \mathbb{R}  ^8 $ defined by
\

 \begin{align*}
   & \Psi _E =e: (\lambda _0 , \mu , r _{ 12 } , r _{ 13 } , r _{ 14 } , r _{ 23 } , r _{ 24 } , r _{ 34 }) \rightarrow (\lambda _0 , \mu , r _{ 12 } , r _{ 13 } , r _{ 14 } , r _{ 23 } , r _{ 24 } , r _{ 34 }) \\ 
  & \Psi _{h _1 }  =\gamma _1 : (\lambda _0 , \mu , r _{ 12 } , r _{ 13 } , r _{ 14 } , r _{ 23 } , r _{ 24 } , r _{ 34 }) \rightarrow (\lambda _0 , \mu , r _{ 12 } , r _{ 23 } , r _{ 24 } , r _{ 13 } , r _{ 14 } , r _{ 34 }) \\ 
  &  \Psi _{h _2 } = \gamma _2 : (\lambda _0 , \mu , r _{ 12 } , r _{ 13 } , r _{ 14 } , r _{ 23 } , r _{ 24 } , r _{ 34 }) \rightarrow (\lambda _0 , \mu , r _{ 12 } , r _{ 14 } , r _{ 13 } , r _{ 24 } , r _{ 23 } , r _{ 34 }) \\  
  &  \Psi _{h _3 } = \gamma _3 : (\lambda _0 , \mu ,r _{ 12 } , r _{ 13 } , r _{ 14 } , r _{ 23 } , r _{ 24 } , r _{ 34 }) \rightarrow (\lambda _0 , \mu , r _{ 12 } , r _{ 24 } , r _{ 23 } , r _{ 14 } , r _{ 13 } , r _{ 34 }) \\
 \end{align*} 
Then, for each fixed value of $ m=m _4 $  we can think of   
the Dziobeck equations as a map $ F: \mathbb{R}  ^8 \to \mathbb{R}  ^8 $.  A computation shows that this map is  equivariant with respect to the action $ \Psi $ for each value of $ m $.
\subsection{The global picture}
In this subsection we give an overview of the bifurcations we found in the case $ m _1 = m _2 = 1$ and $ m _3 = m _4=m $. The approach taken here is analogous to  the approach taken in subsection \ref{ssn:global_picture_1}, in particular the description presented here is based on  numerical computations.
Without loss of generality, we  restrict our discussion to the case $ 0< m \leq 1 $. 
 We found that the Jacobian of the  Dziobeck (and Albouy-Chenciner) equations vanishes along certain solutions for $ m = \tilde m _\ast \approx 0.99229944 \ldots  $ and  $ m =  \tilde m _{ \ast\ast } \approx 0.99729401 \ldots $. Each of these values of $m$ corresponds to a bifurcation.

 At $ m = \tilde m _{\ast\ast} $  there are two fold bifurcations. In this case, as $ m $ is decreased through $ \tilde m _{\ast\ast} $, four solutions coalesce to two. These solutions are illustrated in Figure \ref{fig:fourth_bifurcation}.

 At $ m = \tilde m _\ast $ there are two supercritical pitchfork bifurcations, so that, when $m $ is decreased through $ \tilde m _\ast $, six solutions coalesce to two. These solutions are illustrated in Figure \ref{fig:fifth_bifurcation}. 

 The number of solutions to the Dziobeck and Albouy-Chenciner equations together with the number of geometrically distinct planar central configurations implied by our numerical computations  is summarized in the following table

 \begin{center}
    \begin{tabular}{| l | l | l | l |}
    \hline
     Value of &  \# of solns & \# of solns & \# of geometrically  \\ 
     $ m_3 = m _4 $ &  of Dziobek eqns&  of AC eqns & different c.c's  \\ \hline\hline
    $ (0,\tilde m _\ast )$ & 11 & 24 & 23\\ \hline
    $ \tilde m_\ast$  & 13 & 26 & 25\\ \hline
    $(\tilde m_\ast, \tilde m _{ \ast \ast }]  $ & 15 & 28 & 27\\ \hline
    $ (\tilde m_{\ast\ast}, 1]  $  & 19 & 32 & 31\\ \hline 
    \end{tabular}
\end{center}
\subsection{Bifurcation at \texorpdfstring{$ m=\tilde m _{ \ast\ast} \approx  0.99729401$}{}}

\begin{figure}[!htbp]
\begin{center}
\begin{tikzpicture}[remember picture,inner/.style={anchor=center},outer/.style={},scale=1, every node/.style={scale=1}]
    \node[outer,label= below:{\tiny $ D_6 $-symmetry }] (1) {
        \massConfig{red}{red}{green}{green}
                   {left}{below}{right}{right}
                   {0.000000000000000}{0.000000000000000}
                   {0.733184690657000}{0.000000000000000}
                   {-0.367154278666138}{0.634869584674111}
                   {-0.367154278666138}{-0.634869584674111}
    };
    \node[outer,below=0.4 of 1,label=below:{\tiny $ \mathbb{Z}_2 \cong \left\{ {E, h _2 } \right\} $}] (2) {
        \massConfig{red}{red}{green}{green}
                   {left}{below}{right}{right}
                   {0.000000000000000}{0.000000000000000}
                   {0.723058256373000}{0.000000000000000}
                   {-0.402471109137504}{0.619745489633413}
                   {-0.402471109137504}{-0.619745489633413}
    };

    \node[outer,below=0.4 of 2,label=below:{\tiny $ \mathbb{Z}_2 \cong \left\{ {E, h _2 } \right\} $}] (3) {
        \massConfig{red}{red}{green}{green}
                   {below}{right}{left}{left}
                   {0.000000000000000}{0.000000000000000}
                   {0.723058256373000}{0.000000000000000}
                   {1.12552936551050}{0.619745489633413}
                   {1.12552936551050}{-0.619745489633413}
    };
    \node[outer,below=0.4 of 3,label=below:{\tiny $ D_6 $-symmetry }] (4) {
        \massConfig{red}{red}{green}{green}
                   {below}{right}{left}{left}
                   {0.000000000000000}{0.000000000000000}
                   {0.733184690657000}{0.000000000000000}
                   {1.10033896932314}{0.634869584674111}
                   {1.10033896932314}{-0.634869584674111}
    };

    \node[outer, above right =2.3of 3,label=below:{\tiny{$ \mathbb{Z}_2 \cong \left\{ {E, h _2 } \right\} $}}] (5) {
          \massConfig{red}{red}{green}{green}
                     {left}{below}{right}{right}
                     {0.000000000000000}{0.000000000000000}
                     {0.728302349968000}{0.000000000000000}
                     {-0.385124631367440}{0.627433128281624}
                     {-0.385124631367440}{-0.627433128281624}
    };

    \node[outer,below right=2.3 of 2,label=below:{\tiny{$ \mathbb{Z}_2 \cong \left\{ {E, h _2 } \right\} $}}] (6) {
        \massConfig{red}{red}{green}{green}
                   {below}{right}{left}{left}
                   {0.000000000000000}{0.000000000000000}
                   {0.728302349969000}{0.000000000000000}
                   {1.11342698133592}{0.627433128280769}
                   {1.11342698133592}{-0.627433128280769}
    };

    \path (1) edge [->,out=0,in=180, pos=0.75] node[above=1.5] {} (5);
    \path (2) edge [->,out=0,in=180] node[below] {} (5);
    \path (3) edge [->,out=0,in=180,pos=0.75] node[above=1] {} (6);
    \path (4) edge [->,out=0,in=180] node[below] {} (6);
    \node[outer,below =0.5 of 4] (7) {{\small$ m_3=m_4=1 $} };
    \node[outer,right=2 of 7] (8) {{\small$ m_3=m_4=\tilde m _{ \ast\ast} $ }};
    \path (7) edge [->,out=0,in=180] node[above] {} (8);%
\end{tikzpicture}
\end{center}
\caption{On the left we show two pairs of solutions for $m _1 = m _2 = m _3 = m _4 = 1$. These solutions are continued by increasing the parameter $ m _3 = m _4=m $. Then, at $ m =\tilde m _{ \ast\ast} \approx  0.99729401...$, each pair of solution  coalesce into one solution with a $ \mathbb{Z}_2 \cong \left\{ {E, h _2 } \right\} $ symmetry. This solution cannot be continued further, since we encounter a fold bifurcation.  \label{fig:fourth_bifurcation}} 
\end{figure}

We now use interval arithmetic to analyze one of the fold bifurcations at $ m = \tilde m _{ \ast\ast} $ (the one with $ m _1 $ in the convex hull formed by the other masses). Let $ \tilde F = [(F _1 , \ldots , F _8, \det(DF)] $  be the vector having as components the Dziobeck equations and the determinant of the Jacobian matrix of $F$. Then we can use  the Krawczyk operator to prove the existence of a (unique) solution $(\mathbf{\tilde x}  _{ \ast \ast}, \tilde m _{ \ast\ast} )$ to the equation $  \tilde F (\mathbf{x} , m ) = 0 $ in a  small box. Let  $ [\mathbf{\tilde x} _{ \ast\ast}]\times [\tilde m_{\ast\ast} ]$ be the box containing the solution  $(\mathbf{\tilde x}  _{\ast\ast} , \tilde m _{\ast\ast} )$. 

Using as initial guess a value obtain using numerical computations we obtain that
\[[\mathbf{\tilde x} _{\ast\ast}]=\begin{bmatrix}
        4.08429981829230230011485100912356858215517?\\
        0.78045312314450202651992?\\
        0.5737849085182166770049?\\
        0.58001687737574791204967?\\ 
        0.58001687737574791204967?\\
        1.0069084529737404291463?\\
        1.0069084529737404291463?\\
        0.9886256052963805814736?\\
       \end{bmatrix}\]
and $ [\tilde m _{ \ast\ast }  ]=  0.997294013195487928197522256274082374264547? $.
Suppose $ A = DF ([\mathbf{\tilde x}  _{ \ast\ast} ],[\tilde  m _{ \ast\ast}]) $. Computing the echelon form of $A$ using Gauss elimination it is possible to show rigorously that the null-space  of $A$ is one dimensional. The eigenvectors of $ A $ and $ A ^T $ corresponding to the zero eigenvalue are  
\[\mathbf{v} = 
\left[\begin{array}{r}
0.? \times 10^{-9} \\
-0.179026448? \\
2.989514215? \\
-0.5496816801? \\
-1.4331568126? \\
-0.5496816801? \\
-1.4331568126? \\
1
\end{array}\right], \quad \mathbf{w} =
\left[\begin{array}{r}
0.? \times 10^{-17} \\
-0.23318293319421040? \\
0.990420115347107375? \\
-0.533270542375855490? \\
-0.533270542375855490? \\
-0.4619301897243832226? \\
-0.4619301897243832226? \\
1
\end{array}\right]\]
respectively.
Moreover we have that 
\begin{align*} 
    & \mathbf{w} ^T F _m ([\mathbf{\tilde x}  _{ \ast \ast  } ],[\tilde m _{ \ast\ast }  ]) = 6.32247017553985546? \\
    & \mathbf{w} ^T [D^2F ([\mathbf{\tilde x}  _{ \ast \ast } ],[\tilde m _{ \ast \ast }  ])(\mathbf{v} ,\mathbf{v} ) ]= -227.08976277782379?
\end{align*}  
and thus, since the interval obtained do not contain zero,  by Theorem \ref{thm:sotomayor}, the bifurcation occurring at   $(\mathbf{\tilde x}  _{ \ast \ast }  , \tilde m _{ \ast\ast} )$ is a fold bifurcation.

\subsection{Bifurcation at \texorpdfstring{$ m=\tilde m _\ast \approx  0.99229944$}{}}

\begin{figure}[!htbp]
\begin{center}
\begin{tikzpicture}[remember picture,inner/.style={anchor=center},outer/.style={},scale=1, every node/.style={scale=1}]
    \node[outer,label= below:{\tiny $ D_6 $-symmetry }] (1) {
        \massConfig{red}{red}{green}{green}
                   {left}{right}{right}{right}
                   {0.000000000000000}{0.000000000000000}
                   {1.27013537434668}{0.000000000000000}
                   {0.635067687173345}{1.09996950042949}
                   {0.635067687173342}{0.366656500143162}
    };
    \node[outer,below=0.4 of 1,label=below:{\tiny $ \mathbb{Z}_2$-symmetry}] (2) {
        \massConfig{red}{red}{green}{green}
                   {left}{right}{right}{right}
                   {0.000000000000000}{0.000000000000000}
                   {1.28504674845946}{0.000000000000000}
                   {0.687678307954183}{1.08556146325093}
                   {0.633349624824445}{0.348522944562675}
    };

    \node[outer,below=0.4 of 2,label=below:{\tiny $ \mathbb{Z}_2$-symmetry}] (3) {
        \massConfig{red}{red}{green}{green}
                   {left}{right}{right}{right}
                   {0.000000000000000}{0.000000000000000}
                   {1.28504674845946}{0.000000000000000}
                   {0.597368440505274}{1.08556146325093}
                   {0.651697123635011}{0.348522944562675}
    };
    \node[outer,right=1.0 of 1, label= below:{\tiny $ \mathbb{Z}_2 \cong \left\{ {E, h_1} \right\} $}] (4) {
        \massConfig{red}{red}{green}{green}
                   {left}{right}{right}{right}
                   {0.000000000000000}{0.000000000000000}
                   {1.28025682000000}{0.000000000000000}
                   {0.640128410000000}{1.09011990515304}
                   {0.640128410000000}{0.353546588648903}
    };
    \node[outer,right=1.0 of 2, label=below:{\tiny no symmetry}] (5) {
        \massConfig{red}{red}{green}{green}
                   {left}{right}{right}{right}
                   {0.000000000000000}{0.000000000000000}
                   {1.28448897000000}{0.000000000000000}
                   {0.669030300188609}{1.08588825561314}
                   {0.636807948923096}{0.348333850998722}
    };

    \node[outer,right=1.0 of 3,label=below:{\tiny no symmetry}] (6) {
        \massConfig{red}{red}{green}{green}
                   {left}{right}{right}{right}
                   {0.000000000000000}{0.000000000000000}
                   {1.28448897000000}{0.000000000000000}
                   {0.615458669811391}{1.08588825561314}
                   {0.647681021076904}{0.348333850998722}
    };

    \node[outer,below=0.4 of 3,label=below:{\tiny $ \mathbb{Z}_2 $-symmetry }] (7) {
        \massConfig{red}{red}{green}{green}
                   {left}{right}{right}{right}
                   {0.000000000000000}{0.000000000000000}
                   {1.28504674845946}{0.000000000000000}
                   {0.651697123635010}{0.348522944562676}
                   {0.597368440505275}{1.08556146325093}
    };

    \node[outer,below=0.4 of 7,label=below:{\tiny $ \mathbb{Z}_2 $-symmetry}] (8) { 
        \massConfig{red}{red}{green}{green}
                   {left}{right}{right}{right}
                   {0.000000000000000}{0.000000000000000}
                   {1.28504674845946}{0.000000000000000}
                   {0.633349624824445}{0.348522944562676}
                   {0.687678307954182}{1.08556146325093}
    };

    \node[outer,below=0.4 of 8,label=below:{\tiny $ D_6 $-symmetry }] (9) {
        \massConfig{red}{red}{green}{green}
                   {left}{right}{right}{right}
                   {0.000000000000000}{0.000000000000000}
                   {1.27013537434669}{0.000000000000000}
                   {0.635067687173343}{0.366656500143160}
                   {0.635067687173340}{1.09996950042949}
    };

    \node[outer,right=1.0 of 7,label=below:{\tiny no symmetry }] (10) {
        \massConfig{red}{red}{green}{green}
                   {left}{right}{right}{right}
                   {0.000000000000000}{0.000000000000000}
                   {1.28448897000000}{0.000000000000000}
                   {0.647681021076904}{0.348333850998722}
                   {0.615458669811391}{1.08588825561314}
    };

    \node[outer,right=1.0 of 8,label=below:{\tiny no symmetry}] (11) { 
        \massConfig{red}{red}{green}{green}
                   {left}{right}{right}{right}
                   {0.000000000000000}{0.000000000000000}
                   {1.28448897000000}{0.000000000000000}
                   {0.636807948923096}{0.348333850998722}
                   {0.669030300188609}{1.08588825561314}
    };

    \node[outer,right=1.0 of 9,label=below:{\tiny $\mathbb{Z}_2 \cong \left\{ {E, h_1} \right\} $}] (12) {
        \massConfig{red}{red}{green}{green}
                   {left}{right}{right}{right}
                   {0.000000000000000}{0.000000000000000}
                   {1.28025682000000}{0.000000000000000}
                   {0.640128410000000}{0.353546588648903}
                   {0.640128410000000}{1.09011990515304}
    };

    \node[outer, right =2.3of 5,label=below:{\tiny{$ \mathbb{Z}_2 \cong \left\{ {E, h_1} \right\} $}}] (13) {
        \massConfig{red}{red}{green}{green}
                   {left}{right}{right}{right}
                   {0.000000000000000}{0.000000000000000}
                   {1.28418635000000}{0.000000000000000}
                   {0.642093184824699}{1.08606573696898}
                   {0.642093169312008}{0.348231208201441}
    };

    \node[outer,right=2.3 of 11,label=below:{\tiny{$ \mathbb{Z}_2 \cong \left\{ {E, h_1} \right\} $}}] (14) {
        \massConfig{red}{red}{green}{green}
                   {left}{right}{right}{right}
                   {0.000000000000000}{0.000000000000000}
                   {1.28418635000000}{0.000000000000000}
                   {0.642093175000000}{0.348231197713522}
                   {0.642093184824699}{1.08606573696898}
    };

    \path (1) edge [->] node {} (4);
    \path (2) edge [->] node {} (5);
    \path (3) edge [->] node {} (6);
    \path (7) edge [->] node {} (10);
    \path (8) edge [->] node {} (11);
    \path (9) edge [->] node {} (12);
    \path (4) edge [->,out=0,in=180, pos=0.75] node[above=1.5] {} (13);
    \path (5) edge [->,out=0,in=180] node[below] {} (13);
    \path (6) edge [->,out=0,in=180,pos=0.75] node[above=1] {} (13);
    \path (10) edge [->,out=0,in=180] node[below] {} (14);
    \path (11) edge [->,out=0,in=180] node[below] {} (14);
    \path (12) edge [->,out=0,in=180] node[below] {} (14);
    \node[outer,below =0.5 of 9] (15) {{\small$ m_3=m_4=1 $} };
    \node[outer,below =0.5 of 12] (16) {{\small$ m_3=m_4=0.995 $} };
    \node[outer,right=2.7 of 16] (17) {{\small$ m_3=m_4=\tilde m _{ \ast} $ }};
    \path (15) edge [->] node {} (16);
    \path (16) edge [->] node {} (17);
\end{tikzpicture}
\end{center}
\caption{On the left we show two groups of three solutions for $m _1 = m _2 = m _3 = m _4 = 1$. We continue these solutions by increasing the parameter $ m _3 = m _4=m $. Then, at $ m=\tilde m _\ast \approx  0.99229944...$, each group of solutions  coalesce into one solution with a $ \mathbb{Z}_2$ symmetry. These solutions can be continued further.  \label{fig:fifth_bifurcation}} 
\end{figure}
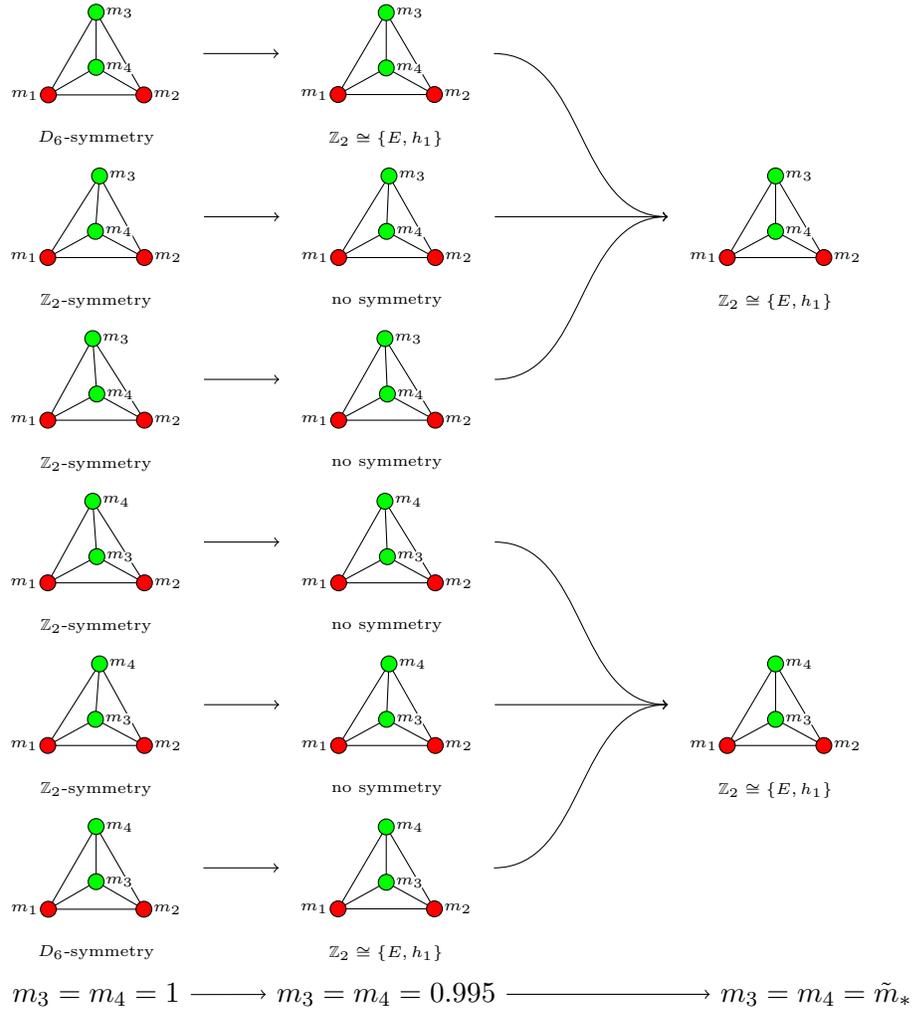

We now use interval arithmetic to analyze one of the pitchfork bifurcations  at $ m = \tilde m _{ \ast } $.
Let $ \tilde F = [(F _1 , \ldots , F _8, \det(DF)] $  be the vector having as components the Dziobeck equations and the determinant of the Jacobian matrix of $F$. To look for this bifurcation we can impose the symmetry $r _{ 13 } = r _{ 23 } $ and $ r _{ 14 } = r _{ 24 } $. 
Let $ G _i $ be equal to $  F _i $ restricted to  $ r _{ 13 } = r _{ 23 }$ and $  r _{ 14 } = r _{ 24 } $ and let $ J $ be t he Jacobian of $ F $ restricted to $ r _{ 13 } = r _{ 23 }$ and $  r _{ 14 } = r _{ 24 } $.  Let $ \tilde G =  [(G _1 , G _2 , G _3,G _6,  \det(J)] $ (since $ G _4 = G _2 $ and $ G _5 = G _3 $ ), then we can use  the Krawczyk operator to prove the existence of a (unique) solution to the equation $  \tilde G (\mathbf{y} , m ) = 0 $, in a  small box. This correspond to proving the existence of a unique symmetric solution $(\mathbf{\tilde x}  _{\ast} , \tilde m _{\ast} )$ for the original equation $\tilde F (\mathbf{x} , m) = 0$ in a small box.  Let  $ [\mathbf{\tilde x} _{\ast}]\times [\tilde m _{\ast} ]$ be the box containing the solution  $(\mathbf{\tilde x}  _{\ast} ,\tilde  m _{\ast} )$. Using as initial guess a value obtain using numerical computations we obtain that
\[[\mathbf{x} _{\ast}]=\begin{bmatrix}
4.0585641815314330056739142?\\
0.7731895057295255894879076?\\
1.0130295438471170352477195?\\
0.57621036528654983921809171?\\
0.99527106304736638582196968?\\
0.57621036528654983921809171?\\
0.99527106304736638582196968?\\
0.58204027784245088387823969?\\
  \end{bmatrix}\]         
and $ [\tilde m _{\ast} ]= 0.9922994477523853474498458?$. 
Suppose $ A = DF ([\mathbf{\tilde x}  _{\ast} ],[\tilde  m _{\ast}]) $. Computing the echelon form of $A$ using Gauss elimination it is possible to show rigourosly that the null-space  of $A$ is one dimensional, since we know that at least one eigenvalue must be zero, but seven of the eight rows of the echelon form are clearly non-zero.   From the echelon form of $A$ we  find that the eigenvectors of $ A $ and $ A ^T $ corresponding to the zero eigenvalue are  
\[\mathbf{v} =
   \left[\begin{array}{r}
    0.? \times 10^{-20} \\
    0.? \times 10^{-21} \\
    0.? \times 10^{-21} \\
    0.3503863414744728128369? \\
    -1.0000000000000000000000? \\
    -0.3503863414744728128369? \\
    1 \\
    0
\end{array}\right]
\quad
    \mathbf{w} =\left[\begin{array}{r}
    0.? \times 10^{-20} \\
    0.? \times 10^{-21} \\
    0.? \times 10^{-21} \\
    1.045364602949539555131? \\
    -1.0000000000000000000000? \\
    -1.0453646029495395551308? \\
    1 \\
    0
\end{array}\right]\]
respectively.
Moreover we have that 
\begin{align*} 
    & \mathbf{w} ^T F _m ([\mathbf{\tilde x}  _{\ast}],\tilde [m _{ \ast}]) =0.? \times 10^{-20} \\
    & \mathbf{w} ^T [DF_m ([\mathbf{\tilde x}  _{ \ast}],[\tilde m _{\ast} ])\mathbf{v}  ]=27.1877227151147526097? \\
    & \mathbf{w} ^T [D^2F ([\mathbf{\tilde x}  _{ \ast}],[\tilde m _{\ast} ])(\mathbf{v} ,\mathbf{v} ) ]=0.? \times 10^{-17} \\
    & \mathbf{w} ^T [D^3F ([\mathbf{\tilde x}  _{\ast}],[\tilde m  _{\ast} ])(\mathbf{v} ,\mathbf{v},\mathbf{v} ) ]= -2639.9736664601674948?\\
\end{align*}  
and thus, using the same argument used in Section \ref{sec:bifurcation_at_0.99184227}, by Theorem \ref{thm:sotomayor} and Lemma \ref{lem:Z2_symmetry}, it follows that  the  bifurcation occurring at   $(\mathbf{x}  _{\ast} , m _{\ast} )$ is a pitchfork bifurcation. Since the last expression above is negative, again by Theorem \ref{thm:sotomayor}, the branches occur for $ m> m _{ \ast} $ and the bifurcation is supercritical. 

\section*{Acknowledgments}
This work was supported by an NSERC Discovery grant. The authors wish to thank James Montaldi for his crucial help in proving Lemma 1. They wish to express their appreciation for helpful email exchanges and discussions with Giampaolo Cicogna, Tsung-Lin Lee, and Winston Sweatman.  

\bibliographystyle{amsplain}
\bibliography{Books,Papers,My_Papers}
\end{document}